\newtheorem{remark}{Remark}[section]
\newtheorem{definition}{Definition}
\newtheorem{theorem}{Theorem}
\newtheorem{proof}{Proof}
\newcommand{\NGFPSU}{NullG-FPSU\xspace}
\newcommand{\NFFPSU}{1DNF-FPSU\xspace}
\newcommand{\LAYFPSU}{LAYER-FPSU\xspace}
\newcommand{\EXCFPSU}{EXCLS-FPSU\xspace}
\newcommand{\STRFPSU}{STRIP-FPSU\xspace}
\newcommand{\SETUP}{Setup\xspace}  
\newcommand{\QUERY}{Query\xspace}  
\newcommand{\RESP}{Respond\xspace} 
\newcommand{\DECOD}{Decode\xspace} 
\newcommand{\ENCOD}{Encode\xspace} 
\newcommand{\UNION}{Union\xspace}  
\begin{document}

\title{Fuzzy Private Set Union via Oblivious Key Homomorphic Encryption Retrieval}

\author{Jean-Guillaume Dumas \thanks{Univ. Grenoble Alpes. Laboratoire Jean Kuntzmann, CNRS, UMR 5224. 150 place du Torrent, IMAG -
CS 40700, 38058 Grenoble, cedex 9 France. {jean-guillaume.dumas}@univ-grenoble-alpes.fr}, Aude Maignan
  \thanks{Univ. Grenoble Alpes. Laboratoire Jean Kuntzmann, CNRS, UMR 5224. 150 place du Torrent, IMAG -
CS 40700, 38058 Grenoble, cedex 9 France. {aude.maignan}@univ-grenoble-alpes.fr}, Luiza Soezima \thanks{Aarhus University, Aarhus, Denmark, reisbs.luiza@gmail.com}}

\maketitle
\begin{abstract}
Private Set Multi-Party Computations are protocols that allow parties
to jointly and securely compute functions: apart from what is
deducible from the output of the function, the input sets are kept
private.
Then, a Private Set Union (PSU), resp. Intersection (PSI), is a
protocol that allows parties to jointly compute the union, resp. the
intersection, between their private sets.
Now a structured PSI, is a PSI where some structure of
the sets can allow for more efficient protocols.
For instance in Fuzzy PSI, elements only need to be close enough,
instead of equal, to be part of the intersection.
We present in this paper, Fuzzy PSU protocols (FPSU), able to efficiently
take into account approximations in the union.
For this, we introduce a new efficient sub-protocol,
called Oblivious Key Homomorphic Encryption Retrieval (OKHER),
improving on Oblivious Key-Value Retrieval (OKVR) techniques in our
setting.

In the fuzzy context, the receiver set $X=\{x_i\}_{1..n}$ is replaced
by ${\mathcal B}_\delta (X)$, the union of $n$ balls of dimension $d$
with radius $\delta$, centered at the $x_i$. The sender set is just
its $m$ points of dimension $d$. Then the FPSU functionality
corresponds to $X \sqcup \{y \in Y, y \notin {\mathcal B}_\delta (X)\}$.
Thus, we formally define the FPSU functionality and security
properties, and propose several protocols tuned to the patterns of the
balls using the $l_\infty$ distance.
Using our OKHER routine and homomorphic encryption,
we are for instance able to obtain a FPSU protocols with an asymptotic
communication volume bound ranging from $O(dm\log(\delta{n}))$ to
$O(d^2m\log(\delta^2n))$, depending on the receiver data set structure.
\end{abstract}

\section{Introduction}
In the literature of secure Multi-Party Computations, there is a particular field of study that focuses on
securely computing set operations \cite{BS05,KS05,DBLP:conf/acns/Frikken07,DBLP:conf/acisp/DavidsonC17},
such as private intersection, union, or difference, between two
parties, a client and a server (or a sender and a receiver).
Designing privacy-preserving protocols realizing such functionalities
include dealing with trade-offs
between communication volume and rounds, as well as with computational
costs for both the sender and receiver.
While traditional protocols rely on exact matching between set
elements, there has been a recent interest on the relaxed notion of
comparisons, a \textit{fuzzy matching}, and more generally on taking
advantage of some structure of the sets in order to allow for  more efficient
protocols~\cite{Chmielewski:2008:Fuzzy,10.1007/978-3-031-15802-5_12,10.1007/978-3-031-15802-5_12,cryptoeprint:2025/054}.

For instance a Private Set Union (PSU) is a protocol used in scenarios
where information are combined from different sources while
maintaining input privacy. In PSU, at least one party learns the
union of the sets while no party should learn the intersection.
Extending PSU protocols to support fuzzy matching opens a new
set of studies in privacy-preserving computation.
In many applications, elements could be considered equal if they in fact
fall within a defined threshold under some distance metric.
While this notion of fuzziness has been studied in the context of
Private Set Intersection (PSI), its adaptation to PSU remains
challenging,
due to the significantly higher computational and communication costs
induced by distance-based matching over large domains.
These challenges arise in practical applications such as biometric
databases, record linkage, and data integration across heterogeneous
sources, where identifiers are noisy and approximate matching is
required, yet scalability and strict privacy guarantees must be preserved.

Despite this clear practical demand, to the best of our knowledge
there is no protocol in the existing literature that formally
addresses \emph{Fuzzy Private Set Union} (FPSU) as a standalone primitive.
Prior works on fuzzy or approximate matching focus almost exclusively
on intersection-based functionalities and do not directly extend to
union semantics without incurring excessive leakage or prohibitive
overhead.
Some works on (unbalanced) Private Set Union could apply to the fuzzy
setting, but they usually need to consider the whole space of nearby
elements, thus incurring huge costs, at least for the receiver, as
shown in~\cref{tab:new}.
This paper fills this gap by introducing a dedicated
study of Fuzzy PSU, formalizing its functionality and security
properties, identifying its inherent challenges, and proposing
efficient protocols that enables privacy-preserving unions over
approximate equivalence classes of structured sets.

\subsection{Related works}

\subsubsection{Fuzzy Private Set Intersection}
A Private Set Intersection (PSI) protocol allows parties to compute
the intersection of their sets, without revealing any other
information.
Many protocols have been
proposed to achieve this functionality under different applications (such as private contact discovery)
or security models (semi-honest or malicious).
Recent advances, use techniques such as \textit{Oblivious Pseudorandom Function (OPRF)} \cite{10.1145/2976749.2978381, 10.1007/978-3-030-56877-1_2, 10.1007/978-3-030-26954-8_13, Rindal:2021:VOLEPSI, PAXOS},
where the OPRF is usually build from Oblivious Transfers (OT), and
obtain a balance between communications and computations.
Works such as \cite{PAXOS, DBLP:conf/ccs/RaghuramanR22,
  10.1007/978-3-030-84245-1_14} uses \textit{Oblivious Key Value Store
  (OKVS)} to achieve better communication costs.
%
A variant of PSI is the
\emph{Fuzzy Private Set Intersection} (FPSI), in which the parties
consider a fuzzy matching: points below a distance threshold are
considered equal.
Formally, the receiver holds a structured set,
such as the union of $\delta$-radius balls.
The computational and communication costs should then be able to take
into account this structure of the receiver’s set, i.e. the union of
the volume of all the balls  around all the receiver’s elements.

Results in FPSI include~\cite{10.1007/978-3-031-15802-5_12,
  10.1007/978-3-031-68397-8_10}, where the core protocol uses
\textit{boolean Function Secret Sharing (bFSS)}, and their idea relies
on gathering the union of $n$ balls of radius $\delta$ with respect to
the $L_\infty$ norm in $d$-dimensional space.
Here, the dimensions are divided in grids.
Then, after realizing the $\delta$-balls around each item, the balls
will map into intersected grid cells. In the original construction,
correctness is ensured by requiring the balls to be pairwise disjoint.
The use of a grid-based partitioning relaxes this requirement, since
disjointedness no longer needs to
be enforced along each coordinate axis.
However, the costs associated with this idea makes the receiver
computation proportional to the total volume of the input
balls.
Recent work such as \cite{10.1007/978-3-031-58740-5_12} overcomes the
limitations of this previous approach.
The idea is that each block is unique for each disjoint ball, i.e.,
two disjoint balls must be associated with different blocks, in a way
that the receiver encodes the identifier of each block and the sender
would decode it by checking all potential blocks.
Specifically it encodes a key-value store, with the keys
being the projection of the balls together with the threshold
distance, and the values are a secret share associated with a
dimension and an index, together with a hash of the projection of the
ball. With this, they manage to achieve a communication and
computation cost of $\mathcal{O}((\delta d)^2n+m)$, again to be
compared with our~\cref{tab:new}.

\subsubsection{Private Set Union}
A private set union (PSU) protocol is a cryptographic protocol
involving two parties. The receiver, denoted $\mathcal{R}$, owns a set
$\textbf{X}$, and the sender, denoted $\mathcal{S}$, owns a set
\textbf{Y}. The desired functionality of such a protocol
is that the receiver
$\mathcal{R}$ receives only the union $\textbf{X}\cup \textbf{Y}$, and
the sender $\mathcal{S}$ learns nothing.
(Note that this is equivalent to the receiver learning the set difference
$\textbf{Y} {\setminus} \textbf{X}$.)
The
protocol is parameterized with (upper bounds on) the set sizes
$|\textbf{X}|$ and $|\textbf{Y}|$, which are therefore implicitly
revealed to both parties as well. However, the sender $\mathcal{S}$
learns nothing about the content of \textbf{X} and the receiver
$\mathcal{R}$ learns nothing about
$\textbf{X}\cap\textbf{Y}$.
PSU protocols have been widely studied in the case of {\em balanced} input set size~\cite{BS05,KS05,DBLP:conf/acns/Frikken07,DBLP:conf/acisp/DavidsonC17,DBLP:conf/asiacrypt/KolesnikovRT019,DBLP:conf/pkc/GarimellaMRSS21,Jia+22,Zhang:2023:Usenix:LPSU}, motivated by numerous practical applications such as
disease data collection from hospitals.
Finally, there is some recent work in an {\em unbalanced} setting (UPSU), in particular
where the sender's input set is (quite) smaller than the
receiver's~\cite{Tu:2023:CCS:UPSU,10.1145/3658644.3690308,Dumas:2025:upsu}.
This could apply to the fuzzy
setting, but usually one needs to use as points the whole space of nearby
elements, thus incurring huge costs, as
shown in~\cref{tab:new}.

\subsection{Contributions}

{\small \begin{table*}[h]
\caption{Complexity bounds of FPSU protocols, where the receiver $\mathcal{R}$ has a set of  $n$ $d$-dimensional balls of radius $\delta$, $\chi$ is the coloring number of the induced graph $G_X$. And the sender has a set of $m$ $d$-dimensional points with $n>{m}$.}\label{tab:new}
\centering
\begin{tabular}{|c||c|c|c|c|c|}
\hline
\textbf{Protocol}&Graph &Cost for $\mathcal{R}$& Cost for $\mathcal{S}$& Comm. Vol. & $\#$ rounds \\
\hline
\cite[$\mathsf{PSU}_{\mathsf{pk}}$]{10.1145/3658644.3690308}&any&$O(\delta^dn)$&{$O(dm\log (\delta^dn))$}&{$O(dm\log (\delta^dn))$}&$ \geq 3$\\
\cite[$\mathsf{PSU}_{3}$]{Dumas:2025:upsu} &any&$(\delta^dn)^{1+o(1)}$&{$(dm)^{1+o(1)}$}&{$O(dm)$}&3\\
\hline

        \NGFPSU & null&$\mathcal{O}(d\delta n \log(\delta n))$ & $\mathcal{O}(dm \log (\delta n))$ & $\mathcal{O}(dm \log (\delta n))$&3 \\
  \NFFPSU & 1 non-fuz dim.& $\mathcal{O}(d \delta n \log(\delta n))$ & $\mathcal{O}(dm \log (\delta n))$ & $\mathcal{O}(dm \log (\delta n))$&3 \\
\LAYFPSU & any&$\mathcal{O}(d\delta n \log(\delta n)+n^2)$ & $\mathcal{O}(\chi dm \log (\delta n))$ & $\mathcal{O}(\chi dm \log (\delta n))$&4 \\
  \EXCFPSU & exclusive& $\mathcal{O}(d\delta^2  n \log (\delta^2 n))$
  & $\mathcal{O}(d^2m \log (\delta^2 n))$ & $\mathcal{O}(d^2m \log
  (\delta^2 n))$&3 \\
 \STRFPSU &d-stripable& $\mathcal{O}(d\delta^2 n\log (\delta^2
 n)+dn^2)$ & $\mathcal{O}(d^2m \log (\delta^2 n))$ & $\mathcal{O}(d^2m
 \log (\delta^2 n))$&3 \\
\hline
    \end{tabular}
\end{table*}}
Our first contribution is a functionality named Oblivious Key Homomorphic Encryption Retrieval (\textbf{OKHER}). There the server holds a  set of key-value pairs $KV=\{  (k_i,v_i)_{i\in[N]}  \} \subset \mathcal K \times \mathcal C$ where $N$ is large and the clients holds a set of queries $Q=\{(q_i)_{i\in[m]} \} \subset \mathcal K$ such that $m<<N$.
At the end of the \textbf{OKHER} protocol's execution, on the one hand, the
server will have learned nothing about the client's requests. On the
other hand, for all of its requests $q_i$, the client will receive an
encrypted value, equal to the encryption of $v_j$, if there exists a
$j$ such that $q_i=k_j$. Otherwise, it will receive a dummy value
indistinguishable from a random encrypted value $\mathcal C$.
This functionality is a variant of the Oblivious Key-Value Retrieval
(OKVR) one presented in \cite{298226}; but here we do not need a
preliminary oblivious pseudo-randomization (OPRF).
In fact this OPRF ensures that the receiver values are
indistinguishable from a uniformly distribution. We instead require
that our encryption is IND-CPA, with the same effect on the security.

\textbf{OKHER} will be used as a main component of a new \emph{Fuzzy
  Private Set Union} (FPSU) protocol and some variants.
For all of them the receiver owns a set of
d-dimensional points $X= \{x_1, \ldots,x_n\}$ and a fuzzy threshold
$\delta$; and the sender owns a smaller set of  d-dimensional points
$Y= \{y_1, \ldots,y_m\}$.
A first approach could be to apply an UPSU protocol to the $(2 \delta+1)^dn$
elements of the $n$ balls centered in the $x_i$ ($i \in[n]$).
Complexity bounds for this approach are given in the first rows
of~\cref{tab:new}. To avoid the  $\delta^d$ factor, we will consider
projections onto the $d$ different axes (as was already done, e.g.,
for FPSI, in~\cite{10.1007/978-3-031-58740-5_12}).
On top of this, we further construct the induced graph $G_X$ of a set $X$
equipped with a fuzzy threshold $\delta$: the vertices are the centers
of the balls, and they are linked by an edge for each dimensions if
the projection of their balls do intersect.
We then use the different
properties of such graphs to develop more efficent approaches.
In particular, we consider:
\begin{itemize}
\item  \textbf{\NGFPSU}: (FPSU for Null Graphs) This protocol
  (see~\cref{proto:NGFPSU}) can be used when the projections of the balls
  centered at $x_i$ are disjoint along all axes (or more precisely
  when the induced graph is null).

\item  \textbf{\NFFPSU}: (FPSU for Non fully Fuzzy dataset)  This
  protocol 
 is devoted to an application on
  biometric databases where $x_i=(x_{i1}, \ldots, x_{id})$ is composed
  of a non-fuzzy and unique ID $x_{i1}$ and fuzzy biometric scan
  values.
\item  \textbf{\LAYFPSU}: (FPSU by LAYers)  This protocol
(see Subsection \ref{LAYFPSU}) can be used in all cases. It uses a graph
  coloring algorithm on the induced graph $G_X$, but the chromatic
  number (or only an upper bound on it) must be revealed to the sender.

\item \textbf{\EXCFPSU}:  (FPSU for EXClusive graphs) This protocol
(see Subsection \ref{ExcSection})
 can be used when the induced graph is
  a-exclusive (see Definition \ref{defExc}). In particular, it can be
  applied when the projections of the balls centered at $x_i$ are
  disjoint along a single axis.
\item  \textbf{\STRFPSU}: (FPSU for d-STRipable graphs). This protocol
(see Subsection \ref{sec:STRFPSU})
 can be used when the induced graph is
  d-stripable (see Definition \ref{defStrip}). In particular, it can
  be applied when one projection of each ball has no intersection with
  the others.
\end{itemize}
The asymptotic complexity bounds of these five approaches are presented
in~\cref{tab:new}.

\section{Balls and graphs}
In this section, we present some definitions and concepts from graph theory that are useful in our case study.
%
In the context of fuzzy private operations, fuzziness is interpreted by considering balls around the points of $X$.
A ball of radius $\delta$ centered in a point $x_i \in {\mathcal K}^d$ is denoted
${\mathcal B}_\delta (x_i)= [x_{i1}- \delta,x_{i1}+ \delta]
\times\ldots\times [x_{id}- \delta,x_{id}+ \delta]$.  There we denote by
$x_{ij} \in \mathcal{K}$ the $jth$ coordinate of the element $x_i \in
\mathcal{K}^d$.

The fuzzy intersection with respect to $\delta$ becomes
$X \cap_F Y= \left (\bigcup_{j=1}^n {\mathcal B}_\delta (x_j) \right ) \cap Y$
and
$$X \cup_F Y=X \bigsqcup \left\{ y_i \in Y,~\text{s.t.}~y_i \notin \bigcup_{j=1}^n {\mathcal B}_\delta (x_j) \right\} $$

When the balls are not axis-disjoint
(meaning that $\exists l \in[d], \exists (i,j)$    $ \in [n]^2, [x_{il}-
\delta, x_{il}+ \delta] \cap [x_{jl}- \delta, x_{jl}+ \delta] \neq
\emptyset$)
the construction of the key-value pairs are more complex.
Since two balls can then indeed share some projected values, even if
all the keys in the set of key-value pairs must be different.
We propose a point of view using graphs to ease the construction of
the different cases.
\subsection{Mapping fuzziness to graphs}
\begin{definition}

Let $X=\{x_1,x_2, \ldots, x_n\} \subset {\mathcal K}^d$, and $\delta
\in \mathbb N$ . An induced edge-labeled graph $G_X=(V_X,E_X, \phi)$
is defined by its set of vertices $V_X=\{1,2,\ldots, n\}$ and its set
of edges $E_x$ associated with a label map  $\phi: E_X \rightarrow \{1,.., d \}^*$ such that for $i \neq j, (i,j) \in E_X$ and $l \in \phi((i,j))$ if and only if $ l \in [d] \wedge  [x_{il}- \delta,x_{il}+\delta] \cap [x_{jl}-\delta,x_{jl}+ \delta] \neq \emptyset$.

If $(i,j) \notin E_G, \phi((i,j))= \emptyset$ and
for all $i \in V_G$,

$deg(i) =\sum_{j \in V {\setminus} \{i\}} card (\phi((i,j)))$
\end{definition}
\begin{remark}
Remark: Here we consider antireflexive undirected graphs so that $(i,j)=(j,i)$ is only considered once.
\end{remark}
By extension, we also have that: $ deg(G_X)= max_{i \in V_X} deg(i)$.
\begin{remark}
If we consider $G_X$ as a simple graph (by removing $\phi$ to obtain an unweighted, undirected graph containing no graph loops nor multiple edges) then its degree is not greater than $deg(G_X)$ because the multiple edges are not taken into account in the computation of the degree.
\end{remark}
Next, the idea will be that the edges will be labeled and denoted by
their associated dimension, hence the following definition:
\begin{definition}
If $(i,j) \in E$ and $a \in \phi((i,j))$ we say that $(i,j)$ is an
a-edge of $G_X$ and we denote $E^a_X$ the subset of $E_X$ restricted
to a-edges and $G_X^a=(V_X,E^a_X)$ the corresponding spanning subgraph of $G_X$. By extension, $E_X^{a,b}$ is the subset of edges $(i,j)\in E_X$ such that $\{a,b\} \subset \phi((i,j))$.
\end{definition}

When the induced edge-labeled graph is a null graph, the balls are axis-disjoint meaning that the projections of all the balls in any axis  are disjoints.
The opposite case occurs if $(i,j) \in E_X$ and $\phi((i,j))=[1..d]$:
then $ {\mathcal B}_\delta (x_i)\cap {\mathcal B}_\delta (x_j) \neq \emptyset$. In other words, the degree of an edge is at most $d-1$ if the balls are disjoints.
\begin{definition}
A subgraph $G_Z$ of a graph $G_X=(V_X,E_X,\phi)$ is defined by its set of vertices $V_{Z {\setminus} X}$, its set of edges $E_{Z {\setminus} X}= \{(i,j) \in E_X s.t. i \in V_{Z {\setminus} X} \wedge j \in V_{Z {\setminus} X} \}$ and the label map $\phi' : E_{Z {\setminus} X} \rightarrow \{1,.., d \}^*$ such that if $(i,j) \in E_{Z {\setminus} X}, \phi'((i,j))=\phi((i,j))$ and $\phi'((i,j))= \emptyset$ elsewhere.
\end{definition}
Moreover, for a given $v \in V_X$, we consider the subgraph $G_v$ of
$G_X$ where the vertex set is composed of  $v$ and its neighbors,
commonly denoted $\sigma(v)$, and its set of edges is reduced to $\{(v,i) \in E_X \}$.

\subsection{a-exclusive graphs}
\begin{definition}\label{defExc}
An edge labeled graph $G_X$ is a-exclusive if for all a-edges $(i,j)$, $\phi((i,j))$ contains only $a$.
In other words,
$a \in \phi((i,j)) \implies \neg (\exists b\in[d] {\setminus} \{a\}, b \in \phi((i,j)))$.
\end{definition}
We also can write $a \in \phi(i,j) \implies  \phi((i,j))=\{a\}$.
By extension, we say that a vertex $v \in V_X$ is a-exclusive if the subgraph $G_v$ of $G_X$ is a-exclusive.
%
%

The following special cases are examples of a-exclusive graphs:
\begin{itemize}
\item In the two-dimensional case, if the balls are all disjoint, the induced graph is 1-exclusive and 2-exclusive.
\item If there exists one dimension $a$ on which the projections of all balls are disjoint,  then the induced graph is a-exclusive.
\end{itemize}

\subsection{Chromatic number and independent sets}\label{ssec:chromatic}

%
Two points of the set $X$ (namely $x_i$ and $x_j$ with $i \neq j$) are said to be {\emph{a-dependent}} if
$[x_{ia}- \delta, x_{ia} +\delta] \cap [x_{ja}- \delta, x_{ja} +\delta] \neq \emptyset$.
%
%
%
%
The fundamental problem of partitioning a set of objects into independent subset is named the graph coloring problem.
The Graph Coloring Problem is an NP-hard problem \cite{10.5555/578533} that aims to color the vertices of a graph using the
minimum number of distinct colors, ensuring that adjacent vertices do not share the same color.
Brooks Theorem states that the minimal number of colors (or the {\emph{chromatic number}) $\chi(G)$ is less than  degree of~$G$ plus~$1$.

Some polynomial time heuristic approaches have been developped to
compute this number. Those include greedy algorithm
\cite{diestel2017graph}, \cite{Bollobas1998Modern}, or the DSATUR
algorithm \cite{10.1145/359094.359101}.

The DSATUR algorithm  is a polynomial-time heuristic for the graph
coloring problem based on the concept of {\emph{degree of saturation}}.
At each step, the algorithm selects the vertex most constrained by its
colored neighbors and assigns to it the smallest feasible color.
With a computational complexity bounds of
$O(n^2)$ or $O((n + q) \log{n})$, where $q$ is the number of
edges in the graph, (the latter is better than the former for all but
the densest graphs), DSATUR achieves high-quality coloring.
More precisely, its solutions have a number of colors which is less  than those of the greedy algorithm.
Further details can be found in the cited references. We here only
need to use it as a subroutine and we thus only formalize it as follows.\\
$(Z_i)_{i \in[\chi]}\leftarrow  \textbf{DSATUR}(X, \delta,a)$:  takes as
inputs  a set of $d-$dimensional points $X$ a radius $\delta$ and
optionally $a \in[d]$;
\begin{itemize}
\item If there is no third input, it considers $G_X$ as a simple graph and outputs a set of subset $(Z_i)_{i \in[\chi]}$, such that
$X =\sqcup_{i \in [\chi]} Z_i$ and all the subgraphs $G_{Z_a}$ of $G_X$ are null graphs.
\item If the third input is $a$, it considers the spanning subgraph
  $G^a_X=(V_X,E^a_X)$ and outputs the subsets $(Z_i)_{i
    \in[\chi]}$, s.t. $X =\sqcup_{i \in [\chi]} Z_i$ and all the
  subgraphs $G_{Z_a}$ of $G^a_X$ are null graphs.
\end{itemize}

\section{Adversary Model and Security Definitions}
In this work, we consider two parties : a receiver $\mathcal R$ and a
sender $\mathcal S$.
Highlighting the potential unbalancedness of their resources, they
could respectively be presented as a server and a client.

Our protocols are secure under the \emph{honest-but-curious} adversary model.
This means that the parties must follow the protocol, but are at the
liberty to try to learn some extra information.
In particular no party should learn more
information about the other parties' secrets than what is deductible
from the functionality they are performing.
The associated security proofs we provide use probabilistic polynomial
time simulators following the approach proposed
in~\cite{Lindell2017}.
We first recall the basic notations needed for the proofs.
\begin{itemize}
\item
The view of the receiver (resp. the sender) part during an execution of a protocol $\pi$ is denoted by
$\textbf{view}_{\mathcal R}^{\pi}(X,Y)$ (resp. $\textbf{view}_{\mathcal S}^{\pi}(X,Y)$). Here $X$ corresponds to the input of $\mathcal R$ and $Y$ is the input of $\mathcal S$.
\item
The output of the receiver (resp. the sender)  during an execution of
$\pi$ on $(X,Y)$ is denoted by $\textbf{output}^{\pi}_{\mathcal
  R}(X,Y)$ (resp. by $\textbf{output}^{\pi}_{\mathcal R}(X,Y)$).
\item A receiver (resp. sender) simulator for the protocol $\pi$ is a polynomial time algorithm denoted
$\textbf{Sim}^{\pi}_{\mathcal R}(X,O_{\mathcal R})$ where again $X$ is the input of $\mathcal R$ and
if $O_{\mathcal R}=\textbf{output}^{\pi}_{\mathcal R}(X,Y)$ then we
should have:
$\textbf{Sim}^{\pi}_{\mathcal R}(X,O_{\mathcal R}) \equiv_c\textbf{view}_{\mathcal R}^{\pi}(X,Y).$
A sender simulator is denoted equivalently $\textbf{Sim}^{\pi}_{\mathcal S}(X,O_{\mathcal S})$ and:
$$\textbf{Sim}^{\pi}_{\mathcal S}(Y,\textbf{output}^{\pi}_{\mathcal S}(X,Y)  ) \equiv_c\textbf{view}_{\mathcal R}^{\pi}(X,Y).$$
\end{itemize}

\subsection{Linearly and Fully homomorphic encryption}
A Linear Homomorphic Encryption Scheme (LHE) is a  public-key encryption scheme from $\mathcal V$ to $\mathcal C$ such that
for $(pk,sk)\leftarrow\textbf{E.\SETUP}(\kappa)$, a key pair for a
security parameter $\kappa$, we denote by $\textbf{Enc}_{pk}(m) \in
\mathcal C$ the encryption of $m \in {\mathcal V}$.
The decryption $\textbf{Dec}_{sk}$ must satisfy:
%
\begin{itemize}
\item \textit{homomorphic addition $\textbf{+}$: } $\textbf{Dec}_{sk}(\textbf{Enc}_{pk}(m_1)\textbf{+}\textbf{Enc}_{pk}(m_2))=m_1+m_2$.
\item \textit{cleartext-ciphertext product $\mathbf{\ltimes}$: } $\textbf{Dec}_{sk_L}(m_1\mathbf{\ltimes} \textbf{Enc}_{pk}(m_2))=m_1m_2$.
\end{itemize}

An LHE is a Fully Homomorphic Encryption Scheme (FHE)
if it additionally satisfies:
\begin{itemize}

\item \textit{homomorphic product $\mathbf{\times}$: } $\textbf{Dec}_{sk}(\textbf{Enc}_{pk}(m_1)\mathbf{\times}\textbf{Enc}_{pk}(m_2))=m_1m_2$.

\end{itemize}

In the following, the LHE and FHE schemes we will consider will
satisfy ciphertext indistinguishability under chosen plaintext attacks
(IND-CPA).
 In each FPSU protocols we will propose, we will use two encryptions:
 a FHE for the PIR (Private Information Retrieval) subroutine (see
 \cref{ssec:PIR}) and another encryption, LHE or FHE, depending on the
 protocol's constraints, for the main FPSU protocol.
We will denote by  $\widetilde{x}$ a
value encrypted using fully homomorphic encryption (FHE) under the PIR
subroutine, for instance following \cite{Angel2025}.
There the Learning with Errors (LWE) security assumption is the
central building block of FHE construction
\cite{Albrecht:2015:Concrete}.

We will denote by $\widehat{x}$ a
value encrypted using the second encryption (LHE or FHE). This
encryption scheme will be specified for each protocol.
For instance, we will require that the LHE scheme is secure under the
decisional composite residuosity assumption.
Again we require that the FHE is secure under the LWE
hypothesis.
%

\subsection{Private Information Retrieval}\label{ssec:PIR}
As a subroutine, we need a private information retrieval (PIR)
protocol.
This is a protocol that allows a sender (or a client) to retrieve
mutiple entries from a large
database owned by a receiver (or a server), without revealing what
item is retrieved.

With a database of size $N$, the server time complexity of a PIR
protocol for a single query is of the order of $O(N)$ and this cannot
be reduced with stateless protocols~\cite{Hazay:2023:Optimal}.
(Informally, as the desired index is hidden from the server, for the
query to remain private, the server/receiver has thus to perform some
blind computations on the entire database.)
Then for multiple ($M$) queries, in order to avoid an overall $O(MN)$
server time complexity bound, batchPIR approaches have been developped.

The most recent BatchPIR protocols (namely SealPIR \cite{Angel2025},
OnionPIR \cite{10.1145/3460120.3485381}, Vectorized BatchPIR
\cite{10179329,8401902}, etc.) use buckets to amortize the receiver
computation cost. The overall asymptotic complexity bound on the
receiver side becomes usually $O(N \log N)$.
(Angel \cite{Angel2025}, for instance, uses batch codes for batch
PIR~\cite{10.1145/1007352.1007396}.)
Similarly, the information of $M$ queries, among $N$ indices, must be
of order at least $O(M\log{N})$. Therefore there is a communication
lower bound of that order, even with
preprocessing~\cite{10.1007/978-3-031-30545-0_18}.
Then some optimizations can be done on the
number of round or of ciphertexts: for instance, the communication is
of $2$ ciphertexts per query in \cite{Angel2025}, while Mughees et
al. \cite{10179329,8401902} use RLWE-based SHE schemes to batch
several queries into a single ciphertext.

\begin{figure}[htbp]
    \centering
    \fbox{
        \begin{minipage}{.9\linewidth}
		\centering\textbf{$\mathcal{F}_{\text{PIR}}$: Private Information Retrieval Functionality} \\[-0.3em]
		\rule{\linewidth}{0.25pt} \\[0.25em]

		\raggedright
		\textbf{Parameters:} Security parameters $\kappa$ and $\lambda$.
\textbf{Inputs}
            \begin{itemize}
                \item $\mathcal{R}$ inputs a vector ${K} = \{k_1, \dots, k_{N}\} \subset \mathcal{K}$;
                \item $\mathcal{S}$ inputs some distinct indices ${U} = \{u_1, \dots,u_M \} \subset \mathbb{Z}$ s.t $\forall i \in [M], 1 \leq u_i \leq N$;
                          \end{itemize}

		\textbf{Outputs}
			\begin{itemize}

                \item $\mathcal{R}$  has no outputs
           and learns $M$ but  nothing  about the values of the indices
	\item  $\mathcal{S}$ outputs $K[U]=\{k_{u_1}, \dots, k_{u_{M}}\}$
			\end{itemize}
		\end{minipage}
	}
	\caption{Ideal functionality $\mathcal{F}_{\text{PIR}}$}
    \label{fig:FunctionalityPIR}
\end{figure}

To abstract the  PIR protocol \cref{fig:FunctionalityPIR}, we define the
supporting four procedures involving the two parties. Here the
receiver (viewed as a server) owns a vector $K \in {\mathcal K}^{N}$
and the sender (viewed as a client) owns an index set $U$.

\begin{itemize}
\item
 $ (O^{PIR}_{\mathcal R}, O^{PIR}_{\mathcal S}) \leftarrow
 \textbf{PIR.\SETUP}(\kappa,\lambda)$ is the setup routine that uses as
 inputs  a computational security parameter $\kappa$ and a statistical
 security parameter $\lambda$. The sender gets $O^{PIR}_{\mathcal S}$
 containing a pair $(pk_1,sk_1)$ of public and secret keys of an
 encryption scheme, the receiver gets the parameters
 $O^{PIR}_{\mathcal R}$ containing only $pk_1$.
\item
 $(\tilde I,J) \leftarrow \textbf{PIR.\QUERY}(O^{PIR}_{\mathcal S},U)$:
 This sender's routine inputs $O^{PIR}_{\mathcal S}$ and a sequence of
 distinct indices $U \subset [N]$. The output is the sequence of
 encrypted indices $\tilde I$, to be sent to ${\mathcal R}$, and a
 private state $J$ ($J$ is kept secret by ${\mathcal S}$). As $\tilde
 I$ (resp.  $J$ ) is
 the first (resp. second) output of $\textbf{PIR.\QUERY}$, we will also
 denote it by $\textbf{PIR.\QUERY}(O^{PIR}_{\mathcal S},U)[1]$
 (resp. $\textbf{PIR.\QUERY}(O^{PIR}_{\mathcal S},U)[2]$).
\item
 $\tilde r \leftarrow \textbf{PIR.\RESP}(O^{PIR}_{\mathcal R}, K,\tilde I)$: This receiver's routine inputs a vector $K$ and a query $\tilde I$ and uses  $O^{PIR}_{\mathcal R}$ to output a response $\tilde r$ to be sent to the sender. In order to amortize the costs, the database $K$ has been organized into $b$ buckets using $z$ hash functions.
\item
 $ (w_1,\cdots, w_M) \leftarrow \textbf{PIR.\DECOD}(O^{PIR}_{\mathcal S},J,\tilde r)$: This sender's routine inputs $\tilde r$ and $J$ and uses $O^{PIR}_{\mathcal S}$ to  output a list of elements such that:
\begin{multline}
\textbf{PIR.\DECOD}(O^{PIR}_{\mathcal S},\textbf{PIR.\QUERY}(O^{PIR}_{\mathcal
  S},U)[2],
\textbf{PIR.\RESP}(K,\textbf{PIR.\QUERY}(O^{PIR}_{\mathcal S},U)[1]))=K[U]
\end{multline}
\end{itemize}

The expected security of such a protocol is as follows:
\begin{itemize}
\item {\bf Client's privacy}: the server learns nothing about the
(batch of) indices requested by the client;
\item {\bf Server's privacy}: the client learns nothing about the not
  requested entries;
\item {\bf Correctness}: if the client and the server correctly execute
the protocol, then the client recovers the entries associated to the
requested (batch of) entries.
\end{itemize}

In the following, we consider PIR protocols which are correct and
secure under the honest-but-curious model.
In particular, under the decision-LWE problem of the FHE \cite{Albrecht:2015:Concrete}, the considered PIR protocol \cite{Angel2025} satisfies, for all $K \in {\mathcal K}^{N}$ and for all distinct sequence of indices  $U \subset [N]$, $\textbf{view}_{\mathcal R}^{PIR}(K,U)\equiv_c \textbf{Sim}_{\mathcal R}^{PIR}(K,\bot)$ and $\textbf{view}_{\mathcal S}^{PIR}(K,U) \equiv_c \textbf{Sim}_{\mathcal S}^{PIR}(U,K[U])$.
Finally, \cite{Angel2025} provides a secure instantiation of this
functionality under the LWE assumption.

\subsection{Split Oblivious Key-Value Store}\label{ssec:SOKVS}
A variant of the PIR functionality considers a set of key-value pairs
${KC}=\{(k_1,c_1),\cdots,(k_N,c_N)\}$ where $k_i (i \in[N])$ belongs
to a set of keys  $ \mathcal{K}$ and  $c_i (i \in[N])$ belongs to a
set $\mathcal{C}$. The set is {\emph{well-formed}} if
$\forall 1 \leq i<j\leq N, k_i \neq k_j$.

A key-value store is a map $f$ such that  $f(k)=v$  for all $(k,v)$ belonging to the set of key-value pairs.
Moreover the  key-value store is oblivious (OKVS) if $f$ can be
computed without releasing any information on the values of the keys.
Since $f$ can be large, a sparse OKVS approach has been proposed by
\cite{298226} in order to reduce the communication cost of retrievals
from $O(N)$ to $O(\log N)$.
We further need to refine this to a {\emph{Split OKVS}} (SOKVS)
functionality where the (encoded and encrypted) store is split in two
parts: the first part is kept by the owner, while a (small) part is
provided to the client.
The formalization of this new variant is depicted in
\cref{fig:FunctionalityOKVS}.
\begin{figure}[htbp]
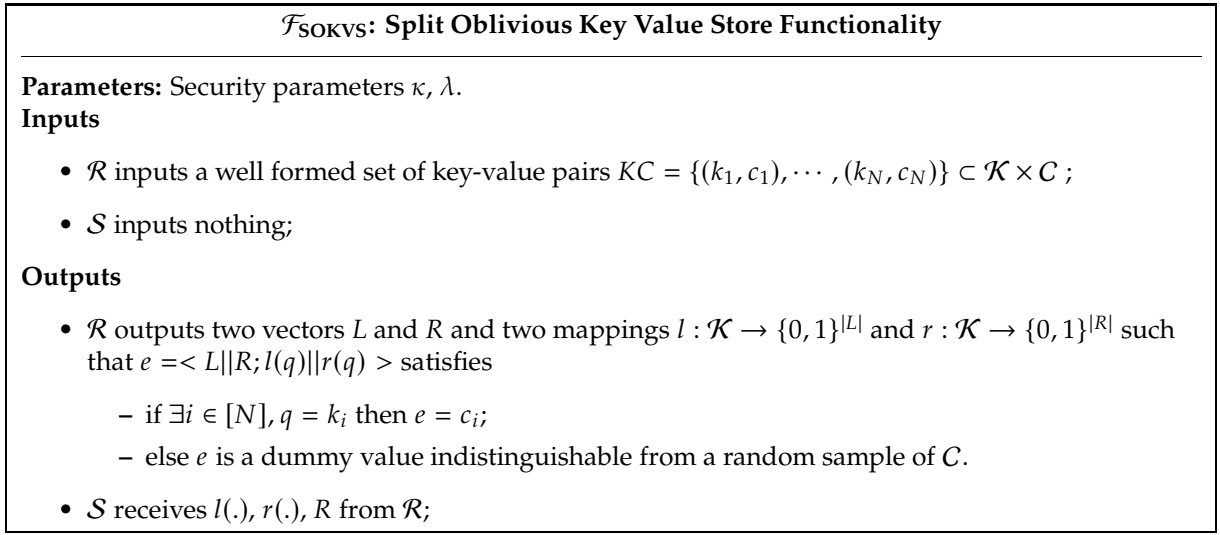

    \centering
    \fbox{
        \begin{minipage}{.95\linewidth}
		\centering\textbf{$\mathcal{F}_{\text{SOKVS}}$: Split Oblivious Key Value  Store Functionality} \\[-0.3em]
		\rule{\linewidth}{0.25pt} \\[0.25em]

		\raggedright
		\textbf{Parameters:} Security parameters $\kappa$, $\lambda$.\\
\textbf{Inputs}
            \begin{itemize}
                \item $\mathcal{R}$ inputs a well formed set of key-value pairs ${KC} = \{(k_1,c_1), \cdots, (k_N,c_N)\} \subset \mathcal{K} \times \mathcal{C}$ ;
                \item $\mathcal{S}$ inputs nothing;
                          \end{itemize}

		\textbf{Outputs}
			\begin{itemize}

                \item $\mathcal{R}$  outputs two vectors $L$ and $R$ and two mappings $l: {\mathcal K} \rightarrow \{0,1\}^{|L|} $ and  $ r: {\mathcal K} \rightarrow \{0,1\}^{|R|}$  such that $e=<L||R;l(q)||r(q)>$ satisfies
\begin{itemize}
\item if $\exists i \in[N], q=k_i$ then $e=c_i$;
\item else $e$ is a dummy value indistinguishable from a random sample
  of $\mathcal{C}$.
\end{itemize}
	\item  $\mathcal{S}$ receives $l(.)$, $r(.)$, $R$ from $\mathcal R$;
			\end{itemize}
		\end{minipage}
	}
	\caption{Ideal functionality $\mathcal{F}_{\text{SOKVS}}$}
    \label{fig:FunctionalityOKVS}
\end{figure}

The formal routines of the SOKVS are as follows:
\begin{itemize}
\item
$(O^{SOKVS}_{\mathcal R},O^{SOKVS}_{\mathcal S}) \leftarrow {\textbf{SOKVS.\SETUP}}(\kappa, \lambda,N)$ is the setup routine that uses as inputs a computational security parameter $\kappa$ and a statistical security parameter $\lambda$. The receiver gets the parameters $O^{SOKVS}_{\mathcal R}$ and the sender gets $O^{SOKVS}_{\mathcal S}$. Both of them contains the random mappings $r(.)$ and $l(.)$.
\item
$(L,R) \leftarrow {\textbf{SOKVS.\ENCOD}}(O^{SOKVS}_{\mathcal R},KC)$: This receiver's routine inputs a set of key-value pairs and a setup $O^{SOKVS}_{\mathcal R}$ and outputs the vectors $R$ and $L$
 such that $<l(q)||r(q),L||R>=e$ where
\begin{itemize}
\item if $\exists i \in[N], q=k_i$ then $e=c_i$;
\item else $e$ is a dummy value indistinguishable from a random sample
  of $\mathcal{C}$.
\end{itemize}
\end{itemize}

To preserve the low communication cost of the OKVS, these constraints
must be met:
\begin{itemize}
\item The output of {\textbf{SOKVS.\ENCOD}} satisfies
$|R|=O(log N)$ and $|L|=O(N)$;
\item $l: {\mathcal K} \rightarrow \{0,1\}^{|L|}$ outputs a sparse
  binary vector which weight is bounded by a constant $\alpha$.
\end{itemize}
$R, l(.), r(.)$ are sent to the client, while the server keeps the
large vector $L$.
By construction, $l(q)$ is a sparse sub-vector of weight not greater
than $\alpha$.
As $L$ is unknown to the sender, the latter is not able to retrieve
any value directly.
The idea is that A PIR protocol has to be activated to obtain the
$\alpha$ values required to compute $<l(q),L>$ for a given $q$, and
then recover the actually sought elements.

In the following, we consider SOKVS protocols (such as the one
designed by Hao et al. \cite{298226}) which are correct and secure
under the {\emph{honest-but-curious model}}.
In particular,
for any well-formed set of pairs $\{(k_i,c_i)_{i \in [N]} \}$ and
$\{(k'_i,c_i)_{i \in [N]} \}$, we have the
 {\emph{indistinguishability of the encoding}}:
\begin{equation}
\textbf{SOKVS.\ENCOD}(\{k_i,c_i\})~{\equiv_c}~\textbf{SOKVS.\ENCOD}(\{k'_i,c_i\}),i{\in}[N]\!
\end{equation}

Moreover, for all distinct $\{k_1,\cdots,k_N \} \subset {\mathcal K}$,
if $\{c_1,...c_N\} \subset {\mathcal C}$ is sampled uniformly at
random from $\mathcal C$ and if $\textbf{SOKVS.\ENCOD}$ succeeds to build
$L$ and $R$, then $L||R$ is statistically {\emph{indistinguishable
    from a uniformly random vector}} of ${\mathcal C}^{|L|+|R|}$:
\begin{equation}
\textbf{SOKVS.\ENCOD}(\{(k_i,c_i)_{i \in[N]}  \}) {\equiv_c} Rand( {\mathcal
  C})^{|L|+|R|}
\end{equation}

Then, we also have the following {\emph{double obliviousness}}:
\begin{equation}
\textbf{SOKVS.\ENCOD}(\{k_i,c_i\})~{\equiv_c}~\textbf{SOKVS.\ENCOD}(\{k'_i,c'_i\}),i{\in}[N]\!
\end{equation}

Finally, \cite{DBLP:conf/ccs/RaghuramanR22} provides a secure
instantiation of this functionality under the so-called Alekhnovich
assumption~\cite{Alekhnovich:2011:Average,Malkin:2021:SilverVOLE,Rindal:2021:VOLEPSI,DBLP:conf/ccs/RaghuramanR22}.


\section{Oblivious Key Homomorphic Encryption Retrieval}
Now, we propose a new functionality named Oblivious Key Homomorphic
Encryption Retrieval (OKHER). There the server holds a  set of
key-value pairs $KV=\{  (k_i,v_i)_{i\in[N]}  \} \subset \mathcal K
\times \mathcal C$ where $N$ is large and the clients holds a set of
$m$ queries $Q=\{(q_i)_{i\in[m]} \} \subset \mathcal K$ such that
$m<<N$.
At the end of the OKHER's execution, the server will learn nothing
about the client queries.
 For all her queries $q_i$, the client will receive a
 {\emph{ciphered}} value
 equal to $Enc(v_j)$, if there exists a $j$ such that
 $q_i=k_j$. Otherwise, he will receive a dummy value belonging to the
 ciphertext space, indistinguishable from a randomly sampled value.
The idea is that the encryption is {\emph{homomorphic}}, so that while
encrypted, the value can be used afterwards.

\subsection{Formalization}
The functionality of an OKHER is presented in \cref{fig:FunctionalityOKHER}:

\begin{figure}[htbp]
    \centering
    \fbox{
        \begin{minipage}{1\linewidth}
		\centering\textbf{$\mathcal{F}_{\text{OKHER}}$: Oblivious Key Homomorphic Encryption Retrieval Functionality} \\[-0.3em]
		\rule{\linewidth}{0.25pt} \\[0.25em]

		\raggedright
		\textbf{Parameters:} Security parameters $\kappa$, $\lambda$, an encryption scheme $(Enc,Dec)$\\
\textbf{Inputs}
            \begin{itemize}
                \item $\mathcal{R}$ inputs a set of key-value pairs ${KV} = \{(k_1,v_1), \cdots, (k_N,v_N)\} \subset \mathcal{K} \times \mathcal{V}$;
                \item $\mathcal{S}$ inputs a set of queries $Q=\{q_1,\cdots, q_m\} \subset \mathcal{K}$;
                          \end{itemize}

		\textbf{Outputs}
			\begin{itemize}

                \item $\mathcal{R}$  outputs $(pk,sk)$ the pair of public and secret keys of the encryption scheme $(Enc,Dec)$,
\item  $\mathcal{S}$ outputs $pk$ and $E=\{e_1, \cdots, e_m \} \subset \mathcal{C}$
such that for all $j \in[m]$,
\begin{itemize}
\item if $\exists i \in[N], q_j=k_i$ then $e_j=Enc_{pk}(v_i)$;
\item else $e_j$ is a value viewed as a random over $\mathcal{C}$.
\end{itemize}
			\end{itemize}
		\end{minipage}
	}
	\caption{Ideal functionality $\mathcal{F}_{\text{OKHER}}$ }
    \label{fig:FunctionalityOKHER}
\end{figure}

We now propose a formal definition
of an OKHER protocol divided into  four algorithms, namely
$\textbf{OKHER.\SETUP}$, $\textbf{OKHER.\QUERY}$, $\textbf{OKHER.\RESP}$,
$\textbf{OKHER.\DECOD}$. The sender $\mathcal{S}$ owns a set of key-value
pairs $KV \subset \mathcal{K}\times \mathcal{V}$ and the receiver
$\mathcal{R}$ owns a set of queries $Q \subset \mathcal K$.
The protocol needs as sub-protocols a PIR and a SOKVS as defined in~\cref{ssec:PIR,ssec:SOKVS}.
\begin{itemize}
\item
 $ (O^{OKHER}_{\mathcal R}, O^{OKHER}_{\mathcal S}) \leftarrow
 \textbf{OKHER.\SETUP}(\kappa,\lambda, (Enc,Dec),N)$ is the setup routine
 that uses as inputs  a computational security parameter $\kappa$, a
 statistical security parameter $\lambda$, an encryption scheme
 $(Enc,Dec)$ and the size of the database $N$. The receiver gets the
 parameters $O^{OKHER}_{\mathcal
   R}=\{pk,sk,O^{\mathcal{SOKVS}}_{\mathcal
   R},O^{\mathcal{PIR}}_{\mathcal R} \}$ and the sender gets
 $O^{OKHER}_{\mathcal S}=\{pk,O^{\mathcal{SOKVS}}_{\mathcal
   S},O^{\mathcal{PIR}}_{\mathcal S} \}$.
\item
 $((\tilde I_i)_{i \in[M]},J, \beta) \leftarrow
 \textbf{OKHER.\QUERY}(O^{OKHER}_{\mathcal S},Q)$ : This sender's routine
 inputs a set of queries $Q \in {\mathcal{K}}^m$, and outputs
 encrypted indices $ (\tilde I_i)_{i \in [M]}$, to be sent to
 ${\mathcal R}$, as well as a private state $J$ (to be used in
 \textbf{PIR.\QUERY}) and $\beta$ (for the ${\textbf{Merge}}$ function,
 see below).
\item
 $(R,(\tilde \rho)_{i \in[M]}) \leftarrow \textbf{OKHER.\RESP}(O^{OKHER}_{\mathcal R},KV,\tilde I)$: This receiver's routine inputs the set of key-value pairs $KV$,  a query $(\tilde I_i)_{i \in[M]}$ and uses  $O^{OKHER}_{\mathcal R}$ to output  the right part of the decoding vector of the SOKVS and a response $(\tilde \rho)$ to be sent to the sender.
\item
 $(e_1,\cdots, e_m) \leftarrow \textbf{OKHER.\DECOD}(O^{OKHER}_{\mathcal S},R, J,\beta, \tilde \rho)$ : This sender's routine inputs $O^{OKHER}_{\mathcal S},R,J,\beta, \tilde \rho$ and  outputs a list of elements
such that for all $j \in[m]$,
\begin{itemize}
\item if $\exists i \in[N], q_j=k_i$ then $e_j=Enc(v_i)$;
\item else $e_j$ is a dummy value indistinguishable from a random
  sample of $\mathcal{C}$.
\end{itemize}

\end{itemize}

The encryption scheme is expected to come with a setup function
{\textbf{E.\SETUP}} building the associated keys from a security
parameter.

Moreover, $\textbf{OKHER.\QUERY} $ needs two more utility functions:
\begin{itemize}
\item $\mu \leftarrow \textbf{NZIndices}(V)$ is a function which extract from a boolean vector $V$ the indices of non zero coefficients and outputs a vector of indices $\mu$.
\item $(U,\beta) \leftarrow \textbf{Merge}( (\mu_i)_{i \in [m]})$
  concatenates several vectors by removing duplicate
  values. It inputs $m$ vectors $(\mu_i)_{i\in[m]} $ of size $\alpha$
  ( $\mu_i \in [n]^\alpha$ ) and outputs a vector of length $M \leq
  \alpha m$ of distinct values $U$ and a $[m] \times [\alpha]$ matrix
  $\beta$ such that the row $\beta_i$ contains the $\alpha$ indices of
  $U$, included in $[1..M]$, and corresponding to the $i$-th query.
i.e.  $j \in \beta[i] \Leftrightarrow \exists l \in[\alpha], \mu_{il}=U_j$.
\end{itemize}

In \cref{proto:OKHER} we show an instantiation of the OKHER protocol,
constructed from the SOKVS and PIR building blocks.

\begin{figure*}[htbp]\centering
\caption{Oblivious Key Homomorphic Encryption Retrieval (OKHER)} \label{proto:OKHER}
    \begin{tabular}{|l c l|}
\hline
    \Huge{$\mathcal{R}$} &&\Huge{$\mathcal{S}$} \\
    $KV=\{ (k_i,v_i)_{i\in[N]}\}\subset\mathcal{K}\times \mathcal{V}$, $\kappa$, $\lambda, (Enc,Dec)$ & & ${Q}=\{q_1,...,q_m\}\subset \mathcal{K}$ \\ \hline
&\textbf{OKHER.\SETUP} (Alg. \ref{algo:OKHER:Su}):&\\
   $(pk,sk)\leftarrow \textbf{E.\SETUP}(\kappa )$&&\\

$(O^{SOKVS}_{\mathcal{R}},O^{SOKVS}_{\mathcal{S}}) \leftarrow \textbf{SOKVS.\SETUP}(\kappa, \lambda,N)$&$\xleftarrow{O^{PIR}_{\mathcal{R}}}$ $\xrightarrow{pk,O^{SOKVS}_{\mathcal{S}}}$&$ (O^{PIR}_{\mathcal{R}},O^{PIR}_{\mathcal{S}})\leftarrow   \textbf{PIR.\SETUP}(\kappa,\lambda)$\\
\hline
&\textbf{OKHER.\QUERY} (Alg. \ref{algo:OKHER:Qr}):&\\
&& Extract $l(.)$ from $O^{SOKVS}_{\mathcal{S}}$\\
&&   $(\mu_i)_{i\in[m]} \leftarrow \textbf{NZIndices}(l(q_i))_{i\in[m]}$\\
&& $(U,\beta) \leftarrow \textbf{Merge}((\mu_i)_{i\in[m]})$\\
&$\xleftarrow{\tilde I}$&$ (\tilde I, J)  \leftarrow \textbf{PIR.\QUERY}(O^{PIR}_{\mathcal{S}},U)$\\
\hline
&\textbf{OKHER.\RESP} (Alg. \ref{algo:OKHER:Rs}): &\\
$KC=\{  (k_i,Enc_{pk}(v_i))_{i\in[N] }  \}$&&\\
$(L, R) \leftarrow \textbf{SOKVS.\ENCOD}(O^{SOKVS}_{\mathcal{R}},KC)$&&\\
$\tilde \rho \leftarrow\textbf{PIR.\RESP}(O^{PIR}_{\mathcal{R}},L,\tilde I) $&$\xrightarrow{R,\tilde \rho}$&\\

\hline
&\textbf{OKHER.\DECOD} (Alg. \ref{algo:OKHER:Dc}): &\\

&&$W \leftarrow \textbf{PIR.\DECOD}(O^{PIR}_{\mathcal{S}}, J, \tilde \rho)$\\
&& Extract $r(.)$ from $O^{SOKVS}_{\mathcal{S}}$\\
 && $( e_i)_{i\in[m]} \leftarrow (<r(q_i),R>+\sum_{j=1}^\alpha w_{\beta_{ij}})_{i\in[m]}$\\
\hline
    \end{tabular}
    \end{figure*}

\subsection{OKHER correctness and security}
\begin{theorem}\label{thm:OKHER}
Given an IND-CPA encryption scheme and  correct and secure
\textbf{PIR} and \textbf{SOKVS} schemes, with a probability of success
of $(1- 2^{-\lambda})$, the protocol defined in \cref{proto:OKHER} and based on the 4 algorithms
$\textbf{OKHER.\SETUP}$, $\textbf{OKHER.\QUERY},$ $\textbf{OKHER.\RESP},$  $\textbf{OKHER.\DECOD}$  is correct and secure under honest-but-curious model with the same probability of success $(1- 2^{-\lambda})$.
\end{theorem}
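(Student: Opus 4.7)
The plan is to split the claim into correctness, receiver security, and sender security, and in each case reduce to the already-stated properties of \textbf{PIR}, \textbf{SOKVS}, and the IND-CPA encryption. For \emph{correctness}, I first trace the sender's decoding. By construction of $\textbf{Merge}$, $U[\beta_{ij}] = \mu_{ij}$, and $\mu_i$ is exactly the set of (at most) $\alpha$ nonzero positions of $l(q_i)$. Assuming $\textbf{PIR.\DECOD}$ returns $W = L[U]$ by PIR correctness, one gets
\[
\sum_{j=1}^{\alpha} w_{\beta_{ij}} \;=\; \sum_{j=1}^{\alpha} L[\mu_{ij}] \;=\; \langle l(q_i), L \rangle ,
\]
so $e_i = \langle l(q_i)\|r(q_i),\, L\|R \rangle$. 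By SOKVS decoding correctness applied to $KC = \{(k_i, Enc_{pk}(v_i))\}_{i\in[N]}$, this equals $Enc_{pk}(v_j)$ when $q_i = k_j$ and a dummy in $\mathcal{C}$ otherwise, exactly matching $\mathcal{F}_{\text{OKHER}}$. The $1 - 2^{-\lambda}$ success probability is inherited from the SOKVS encoding step, which is the only stage that can fail.

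For \emph{receiver security}, the only data-dependent message $\mathcal{R}$ receives is the PIR query $\tilde I$. The simulator $\textbf{Sim}^{OKHER}_{\mathcal{R}}(KV, \bot)$ runs the honest setup and $\textbf{SOKVS.\ENCOD}$ on $KV$ itself, and invokes $\textbf{Sim}^{PIR}_{\mathcal{R}}(L, \bot)$ to produce a simulated $\tilde I$. Since $\textbf{view}^{PIR}_{\mathcal{R}}(L, U) \equiv_c \textbf{Sim}^{PIR}_{\mathcal{R}}(L, \bot)$ by PIR client privacy, the full OKHER receiver view is indistinguishable from the simulator's output.

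The main obstacle is \emph{sender security}. I would build $\textbf{Sim}^{OKHER}_{\mathcal{S}}(Q, E)$ as follows: form a synthetic well-formed pair set $KV' = \{(q_i, e_i) : i \in [m]\}$ (after deduplicating repeated queries, which by construction share the same $e_i$), padded up to size $N$ with fresh random keys disjoint from $Q$ and random values in $\mathcal{C}$; run $\textbf{SOKVS.\ENCOD}$ on $KV'$ to obtain $(L', R')$; compute $U$ via the honest sender routine; and call $\textbf{Sim}^{PIR}_{\mathcal{S}}(O^{PIR}_{\mathcal{S}}, U, L'[U])$ to produce the simulated $\tilde\rho$. Indistinguishability then follows by a short hybrid argument: first replace the real $\textbf{PIR.\RESP}$ output by $\textbf{Sim}^{PIR}_{\mathcal{S}}(U, L[U])$ using PIR sender privacy; then replace the encoding of $KV$ by the encoding of $KV'$ using SOKVS double obliviousness, which carries $L[U]$ coherently to $L'[U]$ since $U$ is fixed by the query phase and is independent of the encoding. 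By SOKVS correctness on $KV'$, the sender's locally recomputed outputs in the final hybrid are exactly $E$, which is the ideal-functionality output. The subtle points I would verify carefully are this coupled swap of $(L, R)$ together with the PIR-simulator input, and the deduplication/padding step that keeps $KV'$ well-formed so that SOKVS is applicable.
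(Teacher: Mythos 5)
Your proposal is correct and follows essentially the same route as the paper's proof: correctness by tracing the PIR and SOKVS decoding, a receiver simulator that encodes $KV$ honestly and invokes $\textbf{Sim}^{PIR}_{\mathcal{R}}(L,\bot)$, and a sender simulator that builds an encoding consistent with the ideal outputs $(e_i)$ and invokes the PIR sender simulator, with indistinguishability coming from PIR security, SOKVS double obliviousness and IND-CPA. The only cosmetic difference is in the sender simulator: the paper constructs $L\|R$ by directly solving the linear system $\langle l(q'_i)\|r(q'_i),L\|R\rangle=g_i$ for the (deduplicated) outputs shuffled with random padding values, using full rank w.h.p., whereas you obtain the same object modularly by running \textbf{SOKVS.\ENCOD} on the padded synthetic set $\{(q_i,e_i)\}$.
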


The proof is similar to those of \cite[\S~4]{298226}. The main
difference is that the OPRF in \cite{298226} is replaced by an
encryption. The former is indeed required to guarantee that the
receiver's values are indistinguishable from a uniformly sampled
distribution. In OKHER this is ensured by the IND-CPA assumption on
the encryption scheme.
More precisely, for all well-formed  set of key-value pairs $KV$ and
for all set of queries $Q$, we have:
\begin{align}
\textbf{view}^{OKHER}_{\mathcal{S}}(KV,Q)&\equiv_c\textbf{Sim}^{OKHER}_{\mathcal{S}}(Q,output_S(KV,Q))\\
\textbf{view}^{OKHER}_{\mathcal{R}}(KV,Q)&\equiv_c\textbf{Sim}^{OKHER}_{\mathcal{R}}(KV,
(pk,sk) \})
\end{align}
The details of the proof are given in \cref{OKHERproof}.


The volume of communication is in two parts:
sending $R$ requires $\lambda+O(\log n)$ communications, while
$m\alpha$ PIR queries on $L$ need $O(m \log n)$,
using Angel et al. PIR scheme~\cite{Angel2025}.
The overall asymptotic complexity bounds are presented
in~\cref{tab:OKHERComplexity}:
\begin{table}[htbp]
    \centering
     \caption{Complexity bounds of the OKHER protocol: The receiver has a set of $N$  key-value pairs and the sender has $m$ queries}
    \label{tab:OKHERComplexity}
    \begin{tabular}{l|ccc}
\toprule
        \textbf{Procedure} & \textbf{Receiver} & \textbf{Sender} & \textbf{Comm.} \\
\midrule
        OKVS \cite{DBLP:conf/ccs/RaghuramanR22} & $\tilde{ \mathcal{O}}( N)$ & $\mathcal{O}(\log (N))$ & $\mathcal{O}(\log (N))$ \\
        PIR \cite{Angel2025} &  $\tilde{\mathcal{O}}(N) $ & $\mathcal{O}(m\log (N))$ & $\mathcal{O}(m\log (N))$ \\
        Total & $\tilde{\mathcal{O}}( N)$ & $\mathcal{O}(m \log (N))$ & $\mathcal{O}(m \log (N))$ \\
\bottomrule
    \end{tabular}

\end{table}

\begin{remark} We will denote an \textbf{OKHER} protocol that uses
  a linearly homomorphic encryption by \textbf{ LOKHER},
  and a protocol that uses a
  fully homomorphic encryption by \textbf{FOKHER}.
\end{remark}
 In a practical point of view and in an honest-but-curious model, \cref{thm:OKHER} becomes:
\begin{itemize}
\item
LOKHER implemented with SealPIR \cite{Angel2025}, Paillier cryptosystem \cite{10.5555/1756123.1756146} and OKVS \cite{DBLP:conf/ccs/RaghuramanR22} 
is correct and secure under the LWE hypothesis of security, the
composite residuosity assumption and the Alekhnovich
assumption~\cite{Alekhnovich:2011:Average,Malkin:2021:SilverVOLE,Rindal:2021:VOLEPSI,DBLP:conf/ccs/RaghuramanR22},
with the probability of success $(1- 2^{-\lambda})$.

\item 
FOKHER implemented with SealPIR~\cite{Angel2025}, the BGV cryptosystem
\cite{Brakerski2011FullyHE} and the OKVS of~\cite{DBLP:conf/ccs/RaghuramanR22}
 is correct and secure under the LWE hypothesis of security and and the Alekhnovich
assumption, with the
 probability of success $(1- 2^{-\lambda})$.
\end{itemize}

\subsection{Formal Algorithms}
We now provide the detailed formalization of our OKHER protocol, as
depicted in~\cref{proto:OKHER}.
\begin{algorithm}[htbp]
\caption{$\textbf{OKHER.\SETUP}(\kappa, \lambda, (Enc,Dec),N)$}\label{algo:OKHER:Su}
\begin{flushleft}
\textbf{Input:} A security parameter $\kappa$,  an encryption scheme
$(Enc,Dec)$ over $\mathcal V$ and $\mathcal C$, with setup \textbf{E.\SETUP}.\\
    \textbf{Outputs:} A set of of secret and share keys $O^{OKHER}_\mathcal{R}$ and $O^{OKHER}_\mathcal{S}$.
\end{flushleft}
\begin{algorithmic}[1]
\State $\mathcal{R}$: compute $(pk,sk)\leftarrow \textbf{E.\SETUP}(\kappa, \lambda, (Enc,Dec))$;


\State $\mathcal{R}$: compute $(O^{SOKVS}_{\mathcal{R}},O^{SOKVS}_{\mathcal{S}}) \leftarrow \textbf{SOKVS.\SETUP}(\kappa, \lambda, (Enc,Dec),N)$;
\State $\mathcal{S}$: compute $ (O^{PIR}_{\mathcal{R}},O^{PIR}_{\mathcal{S}})\leftarrow   \textbf{PIR.\SETUP}(\kappa,\lambda)$;

\State $\mathcal{R}$: send $ O^{SOKVS}_{\mathcal{S}}, pk$ to the sender;
\State $\mathcal{S}$: send $ O^{PIR}_{\mathcal{R}}$ to the receiver;
\State $\mathcal{R}$: \Return $O^{OKHER}_\mathcal{R}\leftarrow ((pk,sk), O^{PIR}_{\mathcal{R}},O^{SOKVS}_{\mathcal{R}})  $;
\State $\mathcal{S}$: \Return $O^{OKHER}_\mathcal{S}\leftarrow ( pk,O^{PIR}_{\mathcal{S}},O^{SOKVS}_{\mathcal{S}}  )$;
\end{algorithmic}
\end{algorithm}

    \begin{algorithm}[htbp]
        \caption{\textbf{OKHER.\QUERY}$(O^{OKHER}_\mathcal{S}, Q)$}
        \label{algo:OKHER:Qr}
            \begin{flushleft}
            \textbf{Input:} A set of parameters generated previously $O^{OKHER}_\mathcal{S}$ and the elements of the sender set $\textbf{Q} = \{q_1, \dots, q_m\} \subset {\mathcal K}$ \\
            \textbf{Output:} The sequence $(\tilde I_i)_{i \in [M]}$, $J$ and $\beta$ , $\tilde I_i$ is of size $\alpha$.
            \end{flushleft}
        \begin{algorithmic}[1]
\State $\mathcal{S}$: extract $ O^{PIR}_{\mathcal{S}}$ and $l(.)$ from $O^{OKHER}_\mathcal{S}$.
\ForAll{$i \in [m]$}
\State $\mathcal{S}$: compute    $\mu_i \leftarrow \textbf{NZIndices}(l(q_i))$;  \Comment{$\mu_i \in ([N])^{\alpha}$}
\EndFor
\State $\mathcal{S}$: compute $(U,\beta) \leftarrow \textbf{Merge}((\mu_i)_{i\in[m]})$
\State $\mathcal{S}$: compute $ (\tilde I,J) \leftarrow \textbf{PIR.\QUERY}(O^{PIR}_{\mathcal{S}},U)$;

\State $\mathcal{S}$: send $\tilde I$ to $\mathcal{R}$;
\State $\mathcal{S}$: \Return $J, \beta$;
\State $\mathcal{R}$: \Return $\tilde I $.
\end{algorithmic}
    \end{algorithm}


    \begin{algorithm}[htbp]
        \caption{\textbf{OKHER.\RESP}$(O^{OKHER}_\mathcal{R}, KV,\tilde I)$}
        \label{algo:OKHER:Rs}
            \begin{flushleft}
            \textbf{Input:} A set of parameters $O^{OKHER}_\mathcal{R}$and the set of masked queries $(\tilde I_i)_{i \in [M]}$. \\
            \textbf{Output:} $R$ the right part of the decoding vector of the SOKV and a set of masked responses $(\tilde \rho_i)_{i \in[M]} $.
            \end{flushleft}
        \begin{algorithmic}[1]
\State $\mathcal{R}$: extract $ O^{PIR}_{\mathcal{R}}$, $ O^{SOKVS}_{\mathcal{R}}$ and $pk$ from $O^{OKHER}_\mathcal{R}$;
\State $\mathcal{R}$: $KC=\{  (k_i,Enc_{pk}(v_i))_{i\in[N] }  \}$; \Comment{where $KV=\{ (k_i,v_i)_{i\in[N]}\}$}
\State $\mathcal{R}$: $(L, R) \leftarrow \textbf{SOKVS.\ENCOD}(O^{SOKVS}_{\mathcal{R}},KC)$;
\State $\mathcal{R}$: compute    $\tilde \rho\leftarrow\textbf{PIR.\RESP}(O^{PIR}_{\mathcal{R}},L,\tilde I)$ ;
\State $\mathcal{R}$: send $R$ and $\tilde \rho$  to $\mathcal{S}$;
\State $\mathcal{S}$: \Return $R, \tilde \rho$.
\end{algorithmic}
    \end{algorithm}

    \begin{algorithm}[htbp]
        \caption{\textbf{OKHER.\DECOD}$(O^{OKHER}_\mathcal{S}, R,J, \beta, \tilde \rho)$}
        \label{algo:OKHER:Dc}
            \begin{flushleft}
            \textbf{Input:} A set of parameters $O^{OKHER}_\mathcal{R}$, $R$, the secret data $J$ and $\beta$ and the vector of encrypted responses $\tilde \rho$. \\
            \textbf{Output:} A list $(e_i)_{i \in [m]} $  such that for all $j \in[m]$,
\begin{itemize}
\item if $\exists i \in[N], q_j=k_i$ then $e_j=Enc_{pk}(v_i)$;
\item else $e_j$ is a value viewed as a random over $\mathcal{C}$.
\end{itemize}
            \end{flushleft}
        \begin{algorithmic}[1]
\State $\mathcal{S}$: extract $ O^{PIR}_{\mathcal{S}}$ and $r(.)$  from $O^{SOKVS}_\mathcal{S}$.

\State $\mathcal{S}$: compute  $W\leftarrow \textbf{PIR.\DECOD}(O^{PIR}_{\mathcal{S}}, J, \tilde \rho)$;
\ForAll{$i \in [m]$}
\State $\mathcal{S}$: compute $e_i \leftarrow <r(q_i),R> + \sum_{j=1}^\alpha w_{\beta_{ij}}$; \Comment{$\beta \in [m] \times [\alpha]$}
\EndFor

\State $\mathcal{S}$: \Return $(e_i)_{i \in[m]}$.

\end{algorithmic}
    \end{algorithm}


\section{Fuzzy Private Set Union }
We consider a $d$-dimensional space ${\mathcal K}^d$ and the $L_\infty$ distance.
In this context, the definition of the fuzzy union of two sets is as follows:
\begin{definition}
The fuzzy union of $X = \{x_1, . . . ,x_n\} \subset {\mathcal K}^d$ and  $Y = \{y_i, . . . ,y_m\}\subset {\mathcal K}^d$ with respect to a threshold $\delta$ is the set
$X \cup_F Y= X \sqcup \left\{ y_i \in Y \mid \forall j \in[n],  \text{dist}(x_j, y_i) > \delta \right\}$
\end{definition}

We start by presenting an ideal functionality for FPSU
in~\cref{fig:FunctionalityFPSU}, that allows two parties, a receiver
and a sender,  to compute the fuzzy union of their private sets.
The receiver holds a set $X = \{x_1, . . . ,x_n\} \subset {\mathcal
  K}^d$ and defines a (fuzzy) distance $\delta$; the sender holds a
set $Y = \{y_i, . . . ,y_m\}\subset {\mathcal K}^d$.
The receiver learns the value of a point $y_i$ if and only if it is
not included in a ball of radius $\delta$ centered in any elements of~$X$.


\begin{figure}[htbp]
    \centering
    \fbox{
        \begin{minipage}{.95\linewidth}
		\centering\textbf{$\mathcal{F}_{\text{FPSU}}$: Fuzzy Private Set Union Functionality}  \\[-0.3em]
		\rule{\linewidth}{0.25pt} \\[0.25em]

		\raggedright
		\textbf{Parameters:} Security parameters $\kappa$, $\lambda$, An encryption scheme $(Enc,Dec)$,a set $\mathcal{K}$, a dimension space $d$, the $L_\infty$ distance function $\text{dist}(.,.)$ over $\mathcal K$.\\
\textbf{Inputs}
            \begin{itemize}
                \item $\mathcal{R}$ inputs $\textbf{X} = \{x_1, \dots, x_n\} \subset \mathcal{K}^d$ and a radius $\delta$;
                \item $\mathcal{S}$ inputs $\textbf{Y} = \{y_1, \dots,y_m\} \subset \mathcal{K}^d$;

            \end{itemize}

		\textbf{Outputs}
			\begin{itemize}

                \item $\mathcal{R}$ outputs
            $
             X \cup_F Y=X \bigsqcup \left\{ y_i \in Y \mid \forall j \in[n], \text{dist}(x_j, y_i) > \delta \right\}
            $, he learns nothing except $m$ and $|\textbf{Y} \cap (\cup_{i=1}^n\mathcal{B}_\delta (x_i))|$.
	\item  $\mathcal{S}$ has no output and learns nothing about the receiver's input.
			\end{itemize}
		\end{minipage}
	}
	\caption{Ideal functionality $\mathcal{F}_{\text{FPSU}}$ for fuzzy private set union.}
    \label{fig:FunctionalityFPSU}
\end{figure}

In order to build a protocol for this functionality, we use the OKHER
subroutine.
The first idea to avoid a $\delta ^d$ factor in the computation complexity,
,as in \cite{10.1007/978-3-031-58740-5_12}, is to use $d$ projections.
 Unfortunately, the construction of well-formed set of key-value pairs
 in that general setting can be difficult and we need to consider
 different cases.
If the balls are disjoint (meaning that some but not all of their axis
projection can  overlap), \cite{10.1007/978-3-031-15802-5_12} propose
a method based on block structures that identify one ball knowing the
projections in $d$ dimensions. This technique multiplies the complexity
cost by $2^d$. To be more efficient, two methods can be considered:
\begin{itemize}
\item Force the keys to be all different
\item Build a subset of disjoint data along the axes, such that the
  disjoint union of these subsets describes $X$ as a whole.
\end{itemize}
We explore these possibilities in the following sections.

\subsection{FPSU on null graph}\label{NGSection}
In this subsection, we focus on a receiver set $X= \{x1,\ldots,x_n\}$
and a fuzzy threshold $\delta$ such that the induced graph is
null.
In other words, the balls ${\mathcal B}_\delta(x_i)$ are axes disjoints.
For this, we build a Fuzzy private set union protocol (FPSU)
named {\textbf \NGFPSU}. This protocol uses an IND-CPA linear
homomorphic encryption scheme over $\mathcal{V}$: $(Enc_{pk},Dec_{sk})$
and {\textbf{LOKHER}} as the main building block. The overall description
is available in \cref{proto:NGFPSU} and the details of the resulting
Setup algorithm \textbf{FPSU.\SETUP },
Query algorithm \textbf{\NGFPSU.\QUERY},
Response algorithm \textbf{\NGFPSU.\RESP},
Decode algorithm \textbf{\NGFPSU.\DECOD}
and Union computation \textbf{FPSU.\UNION},
are available in Algorithms~\ref{algo:FPSU:Su}, \ref{algo:NGFPSU:Qr},
\ref{algo:NGFPSU:Rs}, \ref{algo:NGFPSU:Dc} and  \ref{algo:FPSU:Un}.

\begin{figure*}[htbp]
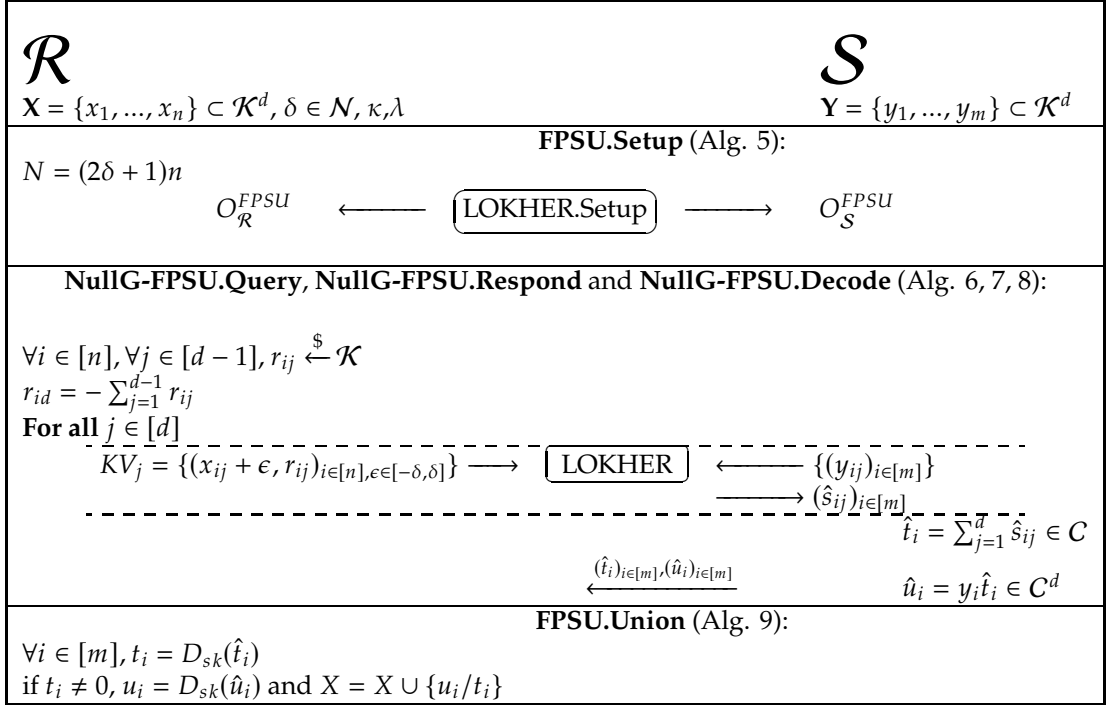
\centering
\caption{ \NGFPSU:  FPSU on null graph  protocol } \label{proto:NGFPSU}
    \hspace*{-1cm}\begin{tabular}{|l c l|}
\hline
&&\\
    \Huge{$\mathcal{R}$} &&\Huge{$\mathcal{S}$} \\
    $\textbf{X}=\{ x_1,...,x_n\}\subset\mathcal{K}^d$, $\delta \in \mathcal{N}$, $\kappa$,$\lambda$ & & $\textbf{Y}=\{y_1,...,y_m\}\subset \mathcal{K}^d$ \\ \hline
&{\textbf{FPSU.\SETUP}} (Alg. \ref{algo:FPSU:Su}):&\\
$N=(2 \delta+1)n$&&\\

   \multicolumn{3}{|c|}{\(\begin{array}{rcl}
    O^{FPSU}_{\mathcal R} \hspace*{0.5cm} \xleftarrow{\hspace{1cm}}&\ovalbox{LOKHER.\SETUP } & \xrightarrow{\hspace{1cm}}\hspace*{0.5cm} O^{FPSU}_{\mathcal S}\\
    \end{array}\)} \\

&&\\
\hline
 \multicolumn{3}{|c|}{\textbf{\NGFPSU.\QUERY}, \textbf{\NGFPSU.\RESP}
   and \textbf{\NGFPSU.\DECOD} (Alg. \ref{algo:NGFPSU:Qr},
   \ref{algo:NGFPSU:Rs}, \ref{algo:NGFPSU:Dc}):}\\
&&\\

 $\forall i \in [n], \forall j \in [d-1], r_{ij} \xleftarrow{\$}  \mathcal{K}$ &&\\
     $r_{id}=- \sum_{j=1}^{d-1}r_{ij}$&&\\
{\textbf{For all}} $j\in[d]$&&\\
 \multicolumn{3}{|c|}{\(\begin{array}{rcl}
      \cdashline{1-3}
KV_j=\{(x_{ij}+\epsilon,r_{ij})_{i\in[n],\epsilon \in[- \delta, \delta] }\}\xrightarrow{\hspace{0.5cm}}&\ovalbox{ LOKHER } & \xleftarrow{\hspace{1cm}} \{(y_{ij})_{i \in [m]} \}\hspace*{1cm}\\
    &  &\xrightarrow{\hspace{1cm}}(\hat s_{ij})_{i\in[m]} \\
      \cdashline{1-3}
    \end{array}\)} \\
    && \hspace*{1cm} $\hat t_i= \sum_{j=1}^{d} \hat s_{ij} \in {\mathcal C}$\\
&$\xleftarrow{(\hat t_i)_{i\in[m]},(\hat u_i)_{i\in[m]}}$& \hspace*{1cm}  $\hat u_i=y_i\hat t_i \in {\mathcal C}^d$\\
\hline
&\textbf{FPSU.\UNION} (Alg. \ref{algo:FPSU:Un}):&\\
 $\forall i \in [m], t_i =D_{sk}(\hat t_i)$&&\\
 if $t_i \neq 0$, $ u_i =D_{sk}(\hat u_i)$ and $X =X \cup \{ u_i/t_i  \}$ &&\\
\hline
    \end{tabular}
    \end{figure*}

\begin{algorithm}[htbp]
\caption{$\textbf{FPSU.\SETUP}(\kappa, \lambda, (Enc,Dec),n, \delta)$}\label{algo:FPSU:Su}
\begin{flushleft}
\textbf{Input:} A security parameter $\kappa$, three  sets  $\mathcal K$,  An encryption scheme $(Enc,Dec)$ over $\mathcal V$ and $\mathcal C$.\\
    \textbf{Outputs:} A set of of secret and share keys
    $O^{FPSU}_\mathcal{R}$ and $O^{FPSU}_\mathcal{S}$.
\end{flushleft}
\begin{algorithmic}[1]

\State $\mathcal{R}$: compute $N=(2 \delta +1)n$;
\State $\mathcal{R}$ and $\mathcal{S}$: compute  $(O^{FPSU}_{\mathcal R},O^{FPSU}_{\mathcal S} )\leftarrow \textbf{LOKHER.\SETUP}(\kappa, \log d+\lambda, (Enc,Dec),N)$;


\State $\mathcal{R}$: \Return $ O^{FPSU}_{\mathcal{R}}$;
\State $\mathcal{S}$: \Return $ O^{FPSU}_{\mathcal{S}}$;
\end{algorithmic}
\end{algorithm}

    \begin{algorithm}[htbp]
        \caption{\textbf{\NGFPSU.\QUERY}$(O^{FPSU}_\mathcal{S}, \textbf{Y})$}
        \label{algo:NGFPSU:Qr}
            \begin{flushleft}
            \textbf{Input:} A set of parameters generated previously $O^{FPSU}_\mathcal{S}$ and the elements of the sender set $\textbf{Y} = \{y_1, \dots, y_m\} \subset {\mathcal K}^d$ \\
            \textbf{Output:} Sequences $(\tilde I_j)_{j \in [d]}$, $(J_j)_{j \in [d]}$ and $(\beta_j)_{j \in [d]}$;
            \end{flushleft}
        \begin{algorithmic}[1]
\State $\mathcal{S}$: extract $ O^{PIR}_{\mathcal{S}}$ and $l(.)$ from $O^{FPSU}_\mathcal{S}$.
\ForAll{$j \in [d]$}
\State $\mathcal{S}$: compute    $ (\tilde I_j,J_j,\beta_j) \leftarrow \textbf{LOKHER.\QUERY}(O^{FPSU}_\mathcal{S},\{y_{1j}, \dots, y_{mj}\} )$; \Comment{$\tilde I_j$ of size $M_j \leq \alpha m$}

\EndFor

\State $\mathcal{S}$: send $(\tilde I_{j})_{j\in[d]}$ to $\mathcal{R}$;
\State $\mathcal{R}$: \Return $(J_j)_{j \in [d]}$ and $(\beta_j)_{j \in [d]}$;
\State $\mathcal{S}$: \Return $(\tilde I_{j})_{j \in[d]}$
\end{algorithmic}
    \end{algorithm}


    \begin{algorithm}[htbp]
        \caption{\textbf{\NGFPSU.\RESP}$(O^{FPSU}_\mathcal{R}, X,(\tilde I_j)_{j \in [d]})$}
        \label{algo:NGFPSU:Rs}
            \begin{flushleft}
            \textbf{Input:} A set of parameters $O^{FPSU}_\mathcal{R}$, $X$ and the sequences of encrypted indices $(\tilde I_j)_{j \in [d]}$.($I_j$ is of lenght $M$) \\
            \textbf{Output:} the right parts of the encoding vectors $(R_i)_{j \in [d]}$ and a  set of encrypted responses $(\tilde{Rs}_{j})_{ j \in [d]} $.
            \end{flushleft}
        \begin{algorithmic}[1]

\ForAll{$ i \in [n]$}
\ForAll{ $j \in [d-1]$}
\State $\mathcal{R}$: randomly generate $r_{ij} \leftarrow  \mathcal{K}_q$ \EndFor
   \State $\mathcal{R}$: compute  $r_{id}=- \sum_{j=1}^{d-1}r_{ij}$
\EndFor

\ForAll{$j \in [d]$}

\State $\mathcal{R}$: build $KV_j=\{(x_{ij}+k,r_{ij})_{i\in[n],k \in[- \delta, \delta] }\}$;
\State $\mathcal{R}$: compute \\  \hspace*{1cm}  $(R_j,(\tilde {Rs}_{ij})_{i\in[M_j]}) \leftarrow \textbf{OKHER.\RESP}(O^{FPSU}_\mathcal{R},KV_j, I_j)$
\EndFor
\State $\mathcal{R}$: send $(R_j)_{j\in[d]}$ and $(\tilde{Rs}_{j})_{j\in[d]}$  to $\mathcal{S}$;
\State $\mathcal{S}$: \Return $(R_j)_{j\in[d]}$ and $(\tilde {Rs}_{j})_{j\in[d]}$.
\end{algorithmic}
    \end{algorithm}

    \begin{algorithm}[htbp]
        \caption{\textbf{\NGFPSU.\DECOD}$(O^{FPSU}_\mathcal{S}, (R_j)_{j\in[d]}, (J_{j})_{j \in[d]},$ $ (\beta_j)_{j\in[d]}, (\tilde{Rs}_{j})_{j \in[d]},Y)$}
        \label{algo:NGFPSU:Dc}
            \begin{flushleft}
            \textbf{Input:} A set of parameters $O^{FPSU}_\mathcal{R}$, the right parts of the decoding vector $(R_j)_{j\in[d]}$, the private sequences $J$ and $\beta$ and the set of masked responses $\hat{Rs}$ and $\textbf{Y}$. \\

            \textbf{Output:} A list $(\hat t_i)_{i \in [m]} $ and  $(\hat u_i)_{i \in [m]} $.
            \end{flushleft}

        \begin{algorithmic}[1]

\ForAll{$j \in [d]$}
\State $\mathcal{S}$: compute $(\hat s_{ij})_{i \in[m]} \leftarrow \textbf{LOKHER.\DECOD}(O^{FPSU}_\mathcal{S},R_j,  J_j, \beta_j, (\tilde{Rs}_{ij})_{i \in [M_j]})$;
\EndFor

\State $\mathcal{S}$: compute  $(\hat t_i)_{i \in[m]}= (\sum_{j=1}^{d} \hat s_{ij})_{i \in[m]}$;
\State $\mathcal{S}$: compute    $(\hat u_i)_{i \in[m]}=(y_i\hat t_i)_{i \in[m]}$;

\State $\mathcal{S}$: send $(\hat t_i,\hat u_i )_{i \in[m]}$ to $\mathcal R$.
\State $\mathcal{R}$: \Return $(\hat t_i,\hat u_i )_{i \in[m]}$.
\end{algorithmic}
    \end{algorithm}

\begin{algorithm}[htbp]
\caption{$\textbf{FPSU.\UNION}(\textbf{X},(\widehat{t_{i}})_{i\in[m]},(\widehat{u_{i}})_{i\in[m]},sk)$}\label{algo:FPSU:Un}
\begin{flushleft} \textbf{Input:} A set of plaintexts
$\textbf{X}=\lbrace x_i\rbrace_{i\in[n]}\subset\mathcal{K}^d$, a set of LHE ciphertext
$(\widehat{t_{i}})_{i\in[m]}\subset{\mathcal C}$ and a set of vectors of LHE ciphertext   $(\widehat{u_{i}})_{i\in[m]}\subset {\mathcal C}^d$.\\
\textbf{Output:} A set of plaintexts $\textbf{Z}\subset{\mathcal{K}}^d$, such that $\textbf{Z}=\textbf{X}\cup\lbrace \textbf{Dec}_{sk}(\widehat{u_i})\textbf{Dec}_{sk}(\widehat{t_i})^{-1}:\textbf{Dec}_{sk}\neq 0\rbrace$.\\
\end{flushleft}
\begin{algorithmic}[1]
\State
$\mathcal{R}$: compute $\textbf{Z}\leftarrow \textbf{X}$;
\ForAll{$i \in [m]$} \State
$\mathcal{R}$: compute $t_i\leftarrow
\textbf{Dec}_{sk}(\widehat{t_i})$;
\If{$t_i\neq 0$} \State
$\mathcal{R}$: compute $\textbf{Z}\leftarrow \textbf{Z} +\lbrace
\textbf{Dec}_{sk}(\widehat{u_i})t_i^{-1}\rbrace$;
\EndIf
\EndFor
\State $\mathcal{R}$: \Return \textbf{Z};
\end{algorithmic}
\end{algorithm}

\begin{theorem}\label{thm:NGFPSU}
In the honest-but-curious adversary settings, the FPSU protocol
presented in \cref{proto:NGFPSU} and composed of the 5 algorithms
\textbf{FPSU.\SETUP},
\textbf{\NGFPSU.\QUERY},
\textbf{\NGFPSU.\RESP},
\textbf{\NGFPSU.\DECOD},
\textbf{FPSU.\UNION}
is a correct and secure fuzzy private set union scheme  with a probability of success of $1- 2^{-\lambda}$ under the LOKHER perfect correctness, obliviousness and security assumptions and a probability of success of $1- 2^{-\lambda-\log{d}}$.
\end{theorem}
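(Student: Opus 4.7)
My plan is to handle correctness and simulation-based security in turn, then combine them via a union bound over the $d$ LOKHER invocations.

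\textbf{Correctness.} I fix a sender point $y_k$ and analyse $t_k = \textbf{Dec}_{sk}(\hat t_k) = \sum_{j=1}^d \textbf{Dec}_{sk}(\hat s_{kj})$. The null-graph hypothesis forces every $KV_j$ to be well-formed: the intervals $[x_{ij}-\delta,x_{ij}+\delta]$ are pairwise disjoint along each axis, so every key of $KV_j$ is distinct and any query $y_{kj}$ matches at most one receiver index $i_j$. Invoking LOKHER correctness, $\hat s_{kj}$ equals $\textbf{Enc}_{pk}(r_{i_j,j})$ on a match and is computationally indistinguishable from a uniform ciphertext otherwise. Three situations arise:
\begin{itemize}
\item if $y_k\in \mathcal{B}_\delta(x_i)$ for some $i$, all $i_j$ collapse to $i$ and the construction $r_{id}=-\sum_{j<d}r_{ij}$ forces $t_k=0$, so $y_k$ is correctly discarded in \textbf{FPSU.\UNION};
\item if at least one projection $y_{kj}$ misses every interval, IND-CPA of the LHE implies that the decryption of the dummy ciphertext masks the remaining summands, so $t_k\neq 0$ except with probability $1/|\mathcal{K}|$;
\item if every projection matches but the indices $i_1,\dots,i_d$ are not all equal, pick $j^\star<d$ with $i_{j^\star}\neq i_d$; the coefficient of the independent uniform variable $r_{i_{j^\star},j^\star}$ in $t_k$ is $+1$ (it cannot be cancelled through $r_{i_d,d}=-\sum_{j<d}r_{i_d,j}$ since $i_{j^\star}\neq i_d$), hence $t_k$ is uniform in $\mathcal{K}$ and nonzero with probability $1-1/|\mathcal{K}|$.
\end{itemize}
Whenever $t_k\neq 0$, the cleartext--ciphertext product in \textbf{\NGFPSU.\DECOD} gives $\hat u_k$ decrypting to $t_k y_k$, so $u_k/t_k=y_k$ is recovered and adjoined to $X$.

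\textbf{Security.} I would lift the OKHER simulators of \cref{thm:OKHER} to FPSU. The sender simulator runs $\textbf{Sim}^{LOKHER}_{\mathcal{S}}$ once per dimension on the projected inputs $\{y_{ij}\}_i$ and concatenates the transcripts; computational indistinguishability from the real view follows from SOKVS double obliviousness combined with PIR client privacy. The receiver simulator, given the ideal output $X \cup_F Y$ together with the leaked cardinality $|Y\cap\bigcup_i \mathcal{B}_\delta(x_i)|$, invokes $\textbf{Sim}^{LOKHER}_{\mathcal{R}}$ per dimension for the query-side view and fabricates the $(\hat t_i,\hat u_i)$ pairs by encrypting $0$ in the caught slots and encrypting $(t',t'y')$ for fresh uniform $t'$ and each uncaught $y'\in (X\cup_F Y)\setminus X$, assigned in arbitrary order. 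IND-CPA of the LHE together with the uniformity argument from case three justify indistinguishability from the real view.

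\textbf{Combining the bounds.} Each of the $d$ LOKHER calls is run with statistical parameter $\log d+\lambda$, so a union bound yields an aggregate LOKHER failure probability of at most $d\cdot 2^{-\lambda-\log d}=2^{-\lambda}$, matching the theorem statement. The step I expect to be most delicate is case two of correctness: promoting \emph{``dummy indistinguishable from a uniform ciphertext''} to \emph{``decrypts to a value that effectively randomizes $t_k$''} requires a careful IND-CPA hybrid argument that must compose cleanly over all $md$ lookups. A secondary nuisance is ensuring the receiver simulator's assignment of outputs to ciphertext positions is statistically close to the real ordering, which is permissible because the functionality already leaks the multiset $(X\cup_F Y)\setminus X$ but no more.
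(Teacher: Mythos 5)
Your proposal follows essentially the same route as the paper's own proof: correctness via the same case analysis (your finer three-way split is simply folded into the paper's two cases, with the well-formedness of each $KV_j$ coming from the null-graph assumption and the nonzero-sum argument handled ``w.h.p.'' at the same level of detail), security by composing the per-dimension LOKHER simulators under a shared setup and using IND-CPA to fabricate the $(\hat t_i,\hat u_i)$ pairs for the corrupted sender and receiver, and the final $1-2^{-\lambda}$ bound obtained from running the $d$ instances at statistical parameter $\lambda+\log d$. I see no genuine gap relative to the paper's argument.
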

The asymptotic complexity bounds are presented in \cref{tab:FPSUComplexity}.
\begin{table}[htbp]
    \centering

    \begin{tabular}{l|ccc}
\toprule
        \textbf{Procedure} & \textbf{Receiver} & \textbf{Sender} & \textbf{Comm.} \\
\midrule
        LOKHER & ${\mathcal{O}}(d\delta n \log (\delta n))$ & $\mathcal{O}(dm\log (\delta n))$ & $\mathcal{O}(dm\log (\delta n))$ \\
        Union & $\mathcal{O}(dm)$ & $\mathcal{O}(dm)$ & $\mathcal{O}(dm)$ \\
        Total & $\mathcal{O}(d\delta n \log (\delta n))$ & $\mathcal{O}(dm \log (\delta n))$ & $\mathcal{O}(dm \log (\delta n))$ \\
\bottomrule
    \end{tabular}
    \caption{Complexity of the Fuzzy Private Set Union protocol.}
    \label{tab:FPSUComplexity}
\end{table}

The proof of \cref{thm:NGFPSU} is by simulation and shown in~\cref{FPSUproof}.

Despite the seemingly significant constraint of null graphs, this
applies to several important cases.
As an example of the application of this protocol, we can cite the biometric database updates applications. There, the first element corresponds to a person's ID (not fuzzy and all different) and the other elements are dedicated to biometric scan values (fuzzy).
We are then able to construct a well-formed set of key-value pairs
while maintaining the same asymptotic complexity.
The idea is to prefix the keys by the
non-fuzzy dimension coordinate: $KV$ keys are then build from
$(x_{i1}||x_{il}+ \epsilon)$, while queries are build from
$(y_{i1}||y_{ij})$.
This generates a new protocol, denoted \NFFPSU, that uses the
\NGFPSU protocol on the axis range $[2..d]$.

In the following subsections, we work on relaxing this null graph
strong assumption.
Mostly the ideas will be to modify the keys of the key-value store, so
that it is still possible to obliviously detect when $y$ is in the
union of balls. We propose several conditions on the structure of the
induced graph, and adaptations, guaranteeing that the protocol remains
efficient.

\subsection{FPSU on any graphs (with a small chromatic number)}\label{LAYFPSU}

To be able to apply our protocol on any graph, we divide the balls
into layers, and then apply~\cref{NGSection} layer by layer.
For this, we build a partition of the set $X=\{x_1,x_2, \ldots, x_n\}$
into $\chi$ subsets such that $X = \sqcup_{l=1}^{\chi} Z_l$ and for
all $l$, the  subgraph  $G_{Z_l}$ of $G_X$  is a null graph.
The main steps of the protocol are now as follows:
\begin{itemize}
\item
The prover builds $\chi$ layers $(Z_l)_{l \in [\chi]}$ where $X =
\sqcup_{l=1}^{\chi} Z_l$ and the subgraph $G_{Z_l}$ of $G_X$ is a null
graph. This partition can be perfomed by a coloring algorithm, such as
DSATUR, as shown in~\cref{ssec:chromatic}.
\item For each layer $l$, the prover defines the distinct keys
  $l||x_{ij}+\epsilon$, for all $x_i \in Z_l$, for all $j \in[d]$ and for all
  $\epsilon \in[-\delta,\delta]$.
$KV_j^l$ represents a subset the key-value pairs for for all $x_i \in
Z_l$ and $\epsilon \in[-\delta,\delta]$, and $KV_j^l$ is well-formed
thanks to the colored partition.
\item For all $j \in[d]$, the prover then gathers
  $KV_j=\cup_{l=1}^{\chi}KV_j^l$. $KV_j$ is also well-formed since
  $KV_j^l$ is well-formed and as all keys start with the layer index
  $\l$, no conflict can occur with any key from another $KV_j^{l'}$
  with $l' \neq l$.
\item For each point $y_i$ of her dataset $Y$, the sender builds the $\chi$ keys $ (l||y_{ij})_{j \in [d], l \in [\chi]}$ and receives the corresponding encrypted values $\hat s_{ijl}$.
\item The sender computes
  $\hat{t_{i}}=\prod_{l\in[\chi]}\sum_{j\in[d]}\hat{s_{ ijl}}$.
    We now have that if $y_i$ belongs to
    $\cup_{i\in[n]}{\mathcal{B}}_\delta(x_i)$ then
    $\hat{t_{i}}=\hat{0}$.
    Reciprocally, with high probability, $\hat{t_{i}}=\hat{0}$ implies
    that $y_i$ belongs to $\cup_{i\in[n]}{\mathcal{B}}_\delta(x_i)$.
    The computation of $t_i$ requires that the encryption scheme is
    fully homomorphic and thus, FOKHER has now to be used as the
    building block.
\item $\textbf{FPSU.\UNION}$ remains unchanged,
    with the technique of~\cite{DBLP:conf/acns/Frikken07}
    to retreive the elements actually in the union:
    upon decryption of $\hat{t_i}$ and
    $\hat{u_i}=y_i{\ltimes}\hat{t_i}$,
    either $t_i=0$, or $y_i=u_it_i^{-1}$.
\end{itemize}

The protocol
\textbf{\LAYFPSU} uses $\textbf{Lay.KeyValue.Build}$
(Algorithm~\ref{algo:LKV}) as a subroutine.
The new algorithms $\textbf{\LAYFPSU.\QUERY}$,
$\textbf{\LAYFPSU.\RESP}$, as well as  $\textbf{\LAYFPSU.\DECOD}$, are variants
of their counter parts in~\cref{proto:NGFPSU}.

 \begin{algorithm}[htbp]
        \caption{\textbf{Lay.KeyValue.Build}$( \textbf{X}, \delta)$}
        \label{algo:LKV}
            \begin{flushleft}
            \textbf{Input:} $\textbf{X}=\{x_1, \ldots,x_n\}\subset {\mathcal K}^d$,and  $\delta$.\\
            \textbf{Output:} The number of layers $\chi$ and a set of key-value pairs $KV_j$ for each dimension $j \in[d]$.
            \end{flushleft}
        \begin{algorithmic}[1]

\State $\mathcal{R}$: initialize $KV_j \leftarrow \emptyset $ for all $j\in[d]$;
\State $\mathcal{R}$: compute $(Z_l)_{l\in [\chi]}=DSATUR(X, \delta)$;
\ForAll{$l \in [\chi]$}
\ForAll{$i \in [lenght(Z_l)]$}
\State $\mathcal{R}$: $r_{id} \xleftarrow{} 0$;
\ForAll{$j \in [d]$}
\State $\mathcal{R}$: $r_{ij} \xleftarrow{\$} {\mathcal V}$;
\State $\mathcal{R}$: $KV_j \xleftarrow{}KV_j,(l||x_{ij}+\epsilon,r_{ij})_{\epsilon \in[- \delta,\delta]}$;
\State $\mathcal{R}$: $r_{id} \xleftarrow{} r_{id}-r_{ij}$;
\EndFor
\State $\mathcal{R}$: $KV_d \xleftarrow{}KV_d,(l||x_{id}+\epsilon,r_{id})_{\epsilon \in[- \delta,\delta]}$
\EndFor
\EndFor
\State $\mathcal{S}$: \Return $( \{KV \}_{j})_{j\in[d]}$.
\end{algorithmic}
    \end{algorithm}

\begin{theorem}\label{thLKVB}
The algorithm \textbf{Lay.KeyValue.Build} described in Algorithm
\ref{algo:LKV} builds $d$ well-formed sets of key-value pairs
$\{(k_{ij},v_{ij})_{i \in [N]} \}$ with $N=(2 \delta+1) n$.

\end{theorem}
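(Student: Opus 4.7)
The plan is to prove two things: first, that $|KV_j| = (2\delta+1)n$ for every dimension $j \in [d]$; second, that every $KV_j$ is \emph{well-formed}, i.e.\ all its keys are pairwise distinct.

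For the cardinality, I would simply unroll the loops of Algorithm~\ref{algo:LKV}. Since $(Z_l)_{l \in [\chi]} \leftarrow \textbf{DSATUR}(X,\delta)$ produces a partition $X = \sqcup_{l=1}^{\chi} Z_l$, we have $\sum_{l=1}^{\chi} |Z_l| = n$. For each dimension $j$, each vertex $x_i$ contributes exactly $(2\delta+1)$ pairs (one for every shift $\epsilon \in [-\delta,\delta]$), giving $|KV_j| = (2\delta+1)n = N$ as claimed.

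The main work is the well-formedness check. Fix a dimension $j \in [d]$, and consider two arbitrary keys from $KV_j$, say $k = l \| (x_{ij}+\epsilon)$ and $k' = l' \| (x_{i'j}+\epsilon')$ with $(l,i,\epsilon) \neq (l',i',\epsilon')$. I split into cases. If $l \neq l'$, the layer prefix distinguishes them immediately. If $l = l'$ and $i = i'$, then necessarily $\epsilon \neq \epsilon'$, and since $\epsilon,\epsilon' \in [-\delta,\delta]$ are integers, $x_{ij} + \epsilon \neq x_{ij} + \epsilon'$. The critical case is $l = l'$ and $i \neq i'$: here I invoke the defining property of the DSATUR output, namely that the subgraph $G_{Z_l}$ of $G_X$ is a null graph. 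By definition of the induced edge-labeled graph, the absence of an edge between $i$ and $i'$ in $G_{Z_l}$ means in particular that no axis $j$ belongs to $\phi((i,i'))$, hence $[x_{ij}-\delta, x_{ij}+\delta] \cap [x_{i'j}-\delta, x_{i'j}+\delta] = \emptyset$. Therefore for any shifts $\epsilon,\epsilon' \in [-\delta,\delta]$ we have $x_{ij}+\epsilon \neq x_{i'j}+\epsilon'$, so the keys differ in their second component.

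The main obstacle is ensuring I correctly justify the null-graph property for each layer. This follows directly from the specification of \textbf{DSATUR} given in Section~\ref{ssec:chromatic}, which guarantees that when called without the third argument (as in Algorithm~\ref{algo:LKV}), every subgraph $G_{Z_l}$ is null with respect to the full edge set of $G_X$. Combining the three cases closes the well-formedness argument, and together with the cardinality count this proves the theorem.
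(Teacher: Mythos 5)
Your proposal is correct and follows essentially the same argument the paper gives (informally, in the bullet points of Section~\ref{LAYFPSU}): within a single layer the keys are distinct because the DSATUR partition makes each $G_{Z_l}$ a null graph, so the $\delta$-intervals of distinct points in the same layer are disjoint on every axis, and across layers the prefix $l\|\cdot$ prevents any collision; the count $(2\delta+1)$ pairs per point per dimension over the partition of the $n$ points gives $N=(2\delta+1)n$. Nothing is missing.
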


Thanks to \cref{thLKVB}, we obtain the following correctness and security proof.
\begin{theorem}\label{thm:LAYFPSU}
In the honest-but-curious adversary settings, the FPSU protocol
and composed of the 5 algorithms
\textbf{FPSU.\SETUP},
\textbf{\LAYFPSU.\QUERY},
\textbf{\LAYFPSU.\RESP},
\textbf{\LAYFPSU.\DECOD},\\
\textbf{FPSU.\UNION}
is a correct and secure fuzzy private set union scheme  with a probability of success of $1- 2^{-\lambda}$ under the OKHER perfect correctness, obliviousness and security assumptions and a probability of success of $1- 2^{-\lambda-\log{d}}$.
\end{theorem}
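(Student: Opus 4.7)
The plan is to adapt the simulation-based proof of Theorem 2 (for \NGFPSU) to the layered setting, leveraging Theorem 3 for well-formedness of the key-value stores and the upgrade from \textbf{LOKHER} to \textbf{FOKHER} to handle a homomorphic product across layers.

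First, for correctness: by Theorem 3, for each dimension $j \in [d]$ the set $KV_j = \bigsqcup_{l \in [\chi]} KV_j^l$ produced by \textbf{Lay.KeyValue.Build} is well-formed of size $N=(2\delta+1)n$, because prefixing keys by the layer index $l$ prevents collisions between layers, while the null-graph property of each $G_{Z_l}$ prevents collisions within a layer. Invoking \textbf{FOKHER} correctness, for each sender query $(l\|y_{ij})$ the returned ciphertext $\hat{s}_{ijl}$ equals $\text{Enc}_{pk}(r_{kj})$ whenever some $x_k \in Z_l$ has $y_{ij} \in [x_{kj}-\delta, x_{kj}+\delta]$, and otherwise an encryption of a uniformly random plaintext. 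Since $G_{Z_l}$ is null, within layer $l$ a point $y_i$ can match at most one center $x_k \in Z_l$ simultaneously across all $d$ coordinates; combined with the secret sharing $r_{id}=-\sum_{j<d} r_{ij}$, this gives $\sum_{j\in[d]} \hat{s}_{ijl} = \text{Enc}_{pk}(0)$ iff $y_i \in \mathcal{B}_\delta(x_k)$ for some $x_k \in Z_l$. Hence the homomorphic product $\hat{t}_i=\prod_{l\in[\chi]}\sum_{j\in[d]}\hat{s}_{ijl}$ encrypts $0$ iff $y_i$ belongs to some ball of $X$, and \textbf{FPSU.\UNION} recovers $y_i = u_i t_i^{-1}$ on all other indices, exactly matching $\mathcal{F}_{\text{FPSU}}$.

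For the failure bound, a spurious zero in a single layer for a single query requires $d$ random shares to cancel, which has probability $|\mathcal{V}|^{-1}$; a union bound over the $d$ coordinates and $m$ queries dictates instantiating the \textbf{FOKHER} with statistical parameter $\log d + \lambda$ (as enforced in Algorithm~\ref{algo:FPSU:Su}), giving an overall correctness failure bounded by $2^{-\lambda}$. The extra $\log d$ factor is exactly what appears in the theorem's stated $1-2^{-\lambda-\log d}$.

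For security, I would use a two-sided simulator mirroring the \NGFPSU proof. The sender's view is a tuple of $d\cdot\chi$ \textbf{FOKHER} transcripts plus the public key; by \textbf{FOKHER} receiver-side obliviousness and FHE IND-CPA it is simulated from $Y$ alone, using only $\chi$ as auxiliary leakage explicitly permitted by the \LAYFPSU functionality. The receiver's view consists of the encrypted index vectors $\tilde{I}_{j,l}$ together with the pairs $(\hat{t}_i,\hat{u}_i)_{i\in[m]}$; \textbf{FOKHER} sender-side security lets the $\tilde{I}_{j,l}$ be simulated from $\bot$, while the ciphertext pairs are simulated from the output $Z=\textbf{output}^{\pi}_{\mathcal{R}}$: for each $y_i\in Z\setminus X$ draw a random nonzero $t_i$ and return $(\text{Enc}_{pk}(t_i),\text{Enc}_{pk}(y_i t_i))$, and for the remaining $m-|Z\setminus X|$ indices return $(\text{Enc}_{pk}(0),\text{Enc}_{pk}(0))$, which by IND-CPA of the FHE is indistinguishable from the real ciphertexts. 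The main obstacle I anticipate is the hybrid argument around the homomorphic product: unlike in \NGFPSU where $\hat{t}_i$ is a single additive expression, here it is a product of $\chi$ sums, so one must argue inductively over layers that the joint distribution of $(\hat{t}_i)_{i\in[m]}$ remains simulatable. This reduces, via a standard hybrid over $(j,l)$, to repeated invocations of \textbf{FOKHER} obliviousness and FHE IND-CPA, while checking that false-positive events across layers are independent given the fresh per-layer masks $r_{ij}$, so that only the factor $\log d$ (and not $\log(d\chi)$) is needed in the statistical parameter.
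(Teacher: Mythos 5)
Your proposal is correct and follows essentially the same route as the paper, whose proof of this theorem consists of invoking \cref{thLKVB} for well-formedness of the layer-prefixed key-value stores and then stating that the simulation-based argument of \cref{thm:NGFPSU} carries over (with \textbf{FOKHER} replacing \textbf{LOKHER} to support the product over layers, and $\chi$ as admitted leakage). The only imprecision is your closing remark that independence of per-layer false positives is what avoids a $\log\chi$ term: the $\log d$ boost of the statistical parameter accounts for the $d$ parallel OKHER instances (one per dimension, each now handling $m\chi$ queries), while spurious zero sums across layers are absorbed into the generic with-high-probability bound, so this side argument is unnecessary but harmless.
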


 The proof is similar to that of Theorem \ref{thm:NGFPSU}.
The number of layers $\chi$ corresponds to the number of colors
obtained during the DSATUR algorithm. This $\chi$ is an upper bound on
the actual chromatic number $\chi(G_X)$ of the graph
$G_X$ viewed as a simple graph (without weight on the edges).
Then a classical upper bound gives $ \chi(G_X) \leq deg(G_X).$
There are two main weaknesses in this approach:
(1) (a bound on) the actual value $\chi$ has to be sent to the sender;
(2) the chromatic number can be large.
If we consider a random graph $G$ of $n$ vertices,
$\chi(G)\sim{O(n/\log{n})}$ (see
\cite{bollobas1998random,ABE2025100600} for instance).

In the next section, we thus propose other approaches that better
control the computational and communication complexities bounds, while
still being applicable to large families of data sets.


\subsection{FPSU on a-exclusive graph}
\label{ExcSection}

In this section, we propose a new construction of the set of key-value pairs which fits for any set of points $\textbf{X} \subset {\mathcal K}^d$ and any  radius $\delta$
such that: there exists a dimension $a \in[d]$ for which $G_X$ is
a-exclusive.
In other words:
if $\exists (i,j)\in[n]^2, i \neq j$,\,s.t.
$[x_{ia}-\delta,x_{ia}+\delta]\cap[x_{ja}-\delta,x_{ja}+\delta]\neq\emptyset$
then
$\forall\epsilon\in[1..d]{\setminus}{a},[x_{i\epsilon}-\delta,x_{i\epsilon}+\delta]\cap[x_{j\epsilon}-\delta,x_{j\epsilon}+\delta]=\emptyset$.
The construction of the sets of key-value pairs then ensures that the
keys are all different, per dimension:
\begin{itemize}
\item We do not consider the structure of the balls in the $a$-axis:
  to build the well-formed set of key-value pairs for the other
  axes. We then add each value $v$ of
  $I=\cup_{i=1}^n[x_{ia}-\delta,x_{ia}+\delta]$ as a prefix to the keys.
\item For all $v \in I$,
we consider the set of indices $L_v=\{i \in[n],v \in[x_{ia}- \delta, x_{ia}+ \delta] \}$ of balls which projection into the first axis hits $v\in{I}$.
\item For all different indices $( i,i') \in L_v^2$, for all the axis
  $j\in [d] {\setminus} \{a \}$ we have the property that
  $[x_{ij}-\epsilon,x_{ij}+\epsilon]\cap[x_{i'j}-\epsilon,x_{i'j}+\epsilon]=\emptyset$;
  since the graph is $a$-exclusive.
  Then, we sample random values $r_{ijv}$ for all
  $j\in[d-1]{\setminus}\{a \}$ and compute
  $r_{idv}=-\sum_{j\in[d-1]{\setminus}\{a\}} r_{ijv}$, in order to
  build the well-formed:
$$KV_j=\{(a||v||x_{ij}+\epsilon,r_{vij})_{v\in{I},i\in{L_v},j\in[2..d]}\}$$
The size of the whole key-value store becomes $(2 \delta+1)^2nd$.
(Note that if $d=a$, the range for $j $ becomes $j \in [d-2]$, and we
build $r_{i(d-1)v}$ as the negative sum of the $r_{ijv}$.)
\item  By construction, we thus have that
  $y \in \cup_{i\in[n]}{\mathcal B}_\delta(x_i)$ must satisfy $\sum_{j \in [d] {\setminus} \{a \}}  s_j=0$ where $s_j$ is the value corresponding to the key $y_{j}$ in $KV_j$.
\item In order for $a$ to remain private, the client builds $d$
  different requests $(a||y_{ia}||y_{ij})$ for each $y_{ij}$.
\end{itemize}
Algorithm \ref{algo:ExcKV} gives the formal constructions of the $d$
sets of key-value pairs.

  \begin{algorithm}[htbp]
        \caption{\textbf{Exc.KeyValue.Build}$( \textbf{X}, \delta,a)$}
        \label{algo:ExcKV}
            \begin{flushleft}
            \textbf{Input:} $\textbf{X}=\{x_1, \ldots,x_n\}$, $\delta$ and $a \in[1..d]$ such that $G_X$ is $a$-exclusive.\\
            \textbf{Output:} A set of key-value pairs $KV_j$ for each dimension $j \in[1..d]{\setminus} \{a \} $.
            \end{flushleft}
        \begin{algorithmic}[1]
\State $\mathcal{R}$: build $I= \cup_{i=1}^n [x_{ia}- \delta, x_{ia}+ \delta]$;
\ForAll{$v \in I$}
\State $\mathcal{R}$:  build $L_v=\{i \in[n],v \in[x_{ia}- \delta, x_{ia}+ \delta] \}$;

\ForAll{$i \in L_v$}
\State $\mathcal{R}$: $r_{id} =0$
\State $\mathcal{R}$: if $a=d$ then $d'=d-1$ else $d'=d$
\ForAll{$j \in [d'-1] {\setminus} \{a \}$}
\State $\mathcal{R}$: $r_{ij} \xleftarrow{} {\mathcal V}$;
\State $\mathcal{R}$: $KV_j \xleftarrow{}KV_j, ( a||v||x_{ij}+\epsilon,r_{ij})_{\epsilon \in[- \delta,\delta]}$;
\State $\mathcal{R}$: $r_{id'} \xleftarrow{} r_{id'}-r_{ij}$;
\EndFor

\State $\mathcal{R}$: $KV_{d'} \xleftarrow{}KV_{d'},(a||v||x_{id'}+\epsilon,r_{id'})_{ \epsilon \in[- \delta,\delta]}$
\EndFor
\EndFor
\State $\mathcal{R}$: build a well-formed set $KV_a$ with$(2 \delta +1)^2 \times n$ randoms of $ \mathcal K \times \mathcal V$.
\State $\mathcal{R}$: \Return $( \{KV \}_{j})_{j\in[d]}$.
\end{algorithmic}
    \end{algorithm}


\begin{theorem}
The algorithm \textbf{Exc.KeyValue.Build} described Algorithm \ref{algo:ExcKV} builds $d$ well-formed sets of key-value pairs $\{ (k_{ij},v_{ij})_{i \in [N]} \}$ with $N=(2 \delta+1)^2 n$.
\end{theorem}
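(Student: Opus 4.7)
The proof plan has two parts: verifying the cardinality $N=(2\delta+1)^2 n$, and verifying well-formedness (distinctness of keys) within each $KV_j$ for $j\in[d]$.

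For the cardinality, I would first observe that for each ball index $i\in[n]$, the value $v$ lies in $[x_{ia}-\delta,x_{ia}+\delta]$ for exactly $(2\delta+1)$ choices of $v$. Summing over $v\in I$ of the sizes $|L_v|$ exactly counts the pairs $(v,i)$ with $i\in L_v$, so
\[
  \sum_{v\in I}|L_v| = \sum_{i=1}^n (2\delta+1) = n(2\delta+1).
\]
Inspecting the inner loop of Algorithm~\ref{algo:ExcKV}, for every such pair $(v,i)$ and for every $j\in[d]\setminus\{a\}$, exactly $(2\delta+1)$ key-value pairs (one per $\epsilon\in[-\delta,\delta]$) are inserted into $KV_j$. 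Hence $|KV_j|=n(2\delta+1)^2$ for each $j\neq a$. For $KV_a$, the last line explicitly inserts $n(2\delta+1)^2$ random pairs, so all $d$ sets have the announced size.

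For well-formedness when $j\neq a$, the keys have the form $a\|v\|(x_{ij}+\epsilon)$. I would argue distinctness by a case split on two keys $a\|v\|(x_{ij}+\epsilon)$ and $a\|v'\|(x_{i'j}+\epsilon')$ generated by the algorithm. If $v\neq v'$, the middle component already separates them. If $v=v'$ and $i=i'$, two distinct loop iterations require $\epsilon\neq\epsilon'$, so the last component differs. The nontrivial case is $v=v'$ with $i\neq i'$: both $i,i'\in L_v$, meaning their $a$-projections intersect at $v$, so $(i,i')$ is an $a$-edge of $G_X$. Because $G_X$ is $a$-exclusive by hypothesis, $\phi((i,i'))=\{a\}$, and therefore for any other dimension $j\in[d]\setminus\{a\}$,
\[
  [x_{ij}-\delta,\,x_{ij}+\delta]\cap[x_{i'j}-\delta,\,x_{i'j}+\delta]=\emptyset.
\]
This precisely says $x_{ij}+\epsilon\neq x_{i'j}+\epsilon'$ for all $\epsilon,\epsilon'\in[-\delta,\delta]$, so the keys differ in the last component. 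The main obstacle of the proof is exactly this third subcase, where the $a$-exclusivity hypothesis is essential — without it, two balls could share both an $a$-projection value and a $j$-projection value, collapsing two distinct $(v,i,\epsilon)$ triples onto the same key.

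The case $j=d'$ (the dimension holding the compensating share $r_{id'}=-\sum r_{ij}$) is handled by exactly the same argument, since the key construction is identical. Finally, the set $KV_a$ is well-formed by fiat from line~12 of the algorithm (it is explicitly built to consist of $n(2\delta+1)^2$ pairs with pairwise distinct keys). Combining the three parts yields $d$ well-formed sets each of size $N=(2\delta+1)^2 n$, as claimed.
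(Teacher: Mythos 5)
Your proof is correct and follows essentially the same route as the paper, which states the theorem without a separate proof but justifies it in the construction discussion preceding Algorithm~\ref{algo:ExcKV}: the prefix $v$ separates keys from different $a$-projection values, and for two distinct indices $i,i'\in L_v$ the $a$-exclusivity of $G_X$ forces $[x_{ij}-\delta,x_{ij}+\delta]\cap[x_{i'j}-\delta,x_{i'j}+\delta]=\emptyset$ for every $j\neq a$, exactly your third subcase. Your counting of $N=(2\delta+1)^2n$ per dimension likewise matches the paper's size accounting, so nothing is missing.
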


The resulting \textbf{\EXCFPSU} protocol for $a$-exclusive induced
dataset graph
uses as subroutines the
algorithm \ref{algo:ExcKV} to build the sets of key-value pairs $KV$
as well as the \textbf{FOKHER} subroutine. The  algorithms
\textbf{\EXCFPSU.\QUERY}, \textbf{\EXCFPSU.\RESP},
\textbf{\EXCFPSU.\DECOD} are variants of algorithms \textbf{\NGFPSU.\QUERY}, \textbf{\NGFPSU.\RESP},
\textbf{\NGFPSU.\DECOD}.

\begin{theorem}\label{thm:EXCFPSU}
In the honest-but-curious adversary settings, the FPSU protocol
composed of the 5 algorithms
\textbf{FPSU.\SETUP},
\textbf{\EXCFPSU.\QUERY},
\textbf{\EXCFPSU.\RESP},
\textbf{\EXCFPSU.\DECOD} and
\textbf{FPSU.\UNION}
 is a correct and secure fuzzy private set union scheme  with a
 probability of success of $1- 2^{-\lambda}$ under the \textbf{FOKHER}
 perfect correctness, obliviousness and security assumptions and a
 probability of success of $1- 2^{-\lambda-\log{d}}$.
\end{theorem}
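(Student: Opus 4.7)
The plan is to follow the same simulation-based template as the proofs of~\cref{thm:NGFPSU} and~\cref{thm:LAYFPSU}, replacing the null-graph argument with the $a$-exclusivity argument and replacing LOKHER by FOKHER. First I would separate the statement into correctness (modulo a statistical failure of probability at most $2^{-\lambda-\log d}$) and security (modulo the FOKHER obliviousness).

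For correctness, I would start from the fact (already established) that \textbf{Exc.KeyValue.Build} outputs $d$ well-formed sets of key-value pairs of size $(2\delta+1)^2 n$. This ensures that the FOKHER hypotheses apply to each $KV_j$, and that each query produces either the encoded share $\widehat{r_{ijv}}$ when the key $(a\|v\|y_{ij})$ appears in the store, or a ciphertext computationally indistinguishable from a fresh random element of $\mathcal{C}$. I would then argue by two cases for a sender point $y_i$. If $y_i\in\mathcal{B}_\delta(x_{i^*})$ for some $i^*$, then by $a$-exclusivity there is a unique index $i^*\in L_{y_{ia}}$ whose projections on all axes $j\in[d]\setminus\{a\}$ also cover $y_{ij}$: for every other $i'\in L_{y_{ia}}$, the $a$-exclusivity forces at least one axis $j\neq a$ on which $|x_{i'j}-y_{ij}|>\delta$, so $i'$ cannot be captured simultaneously along the other axes. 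Hence the $d-1$ responses associated with the prefix $a\|y_{ia}$ are exactly the shares $\widehat{r_{i^*jv}}$, which sum (homomorphically, by construction of the last share) to $\widehat{0}$; the FHE product over the $d$ candidate prefixes is then $\widehat{0}$, so $\widehat{t_i}$ decrypts to $0$ and $\textbf{FPSU.\UNION}$ correctly drops $y_i$. Conversely if $y_i\notin\cup_j\mathcal{B}_\delta(x_j)$, then for every candidate prefix at least one axis returns a dummy ciphertext uniform over $\mathcal{C}$, and the homomorphic product is nonzero except with probability at most $1/|\mathcal{V}|$ per point; choosing the FOKHER statistical parameter as $\lambda+\log d$ and taking a union bound over the $d$ candidate prefixes yields the announced failure bound of $2^{-\lambda-\log d}$.

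For security I would exhibit two polynomial-time simulators. The receiver's simulator $\textbf{Sim}^{\EXCFPSU}_{\mathcal R}$ only sees the FOKHER queries and the encrypted pairs $(\widehat{t_i},\widehat{u_i})$; the view it must reproduce reduces, by a standard hybrid argument, to the indistinguishability of the FOKHER query view (itself guaranteed by FOKHER obliviousness inherited from the PIR and SOKVS primitives) together with a draw of FHE ciphertexts consistent with the given output $X\cup_F Y$. The sender's simulator $\textbf{Sim}^{\EXCFPSU}_{\mathcal S}$ reuses the FOKHER sender-side simulator with synthetic dummy stores of the same shape (in particular the fully random $KV_a$ added on line~14 of Algorithm~\ref{algo:ExcKV} is syntactically identical to the genuine stores), so indistinguishability of the $\widehat{s_{ij}}$ follows directly from the IND-CPA assumption on the FHE scheme together with the FOKHER indistinguishability of encoding. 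The fact that $a$ remains hidden is exactly what is bought by the $d$ symmetric queries: the sender's view on dimension $a$ is produced from a well-formed random store indistinguishable from the $d-1$ genuine ones.

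The only non-routine step, and the one I would write in most detail, is the correctness side: specifically the argument that only a single $i^*\in L_{y_{ia}}$ can be simultaneously consistent with $y_i$ across all non-$a$ axes. The security reduction itself is a straightforward composition of the FOKHER, PIR, SOKVS and IND-CPA guarantees already packaged in \cref{thm:OKHER}, and reuses verbatim the simulator layout of \cref{thm:NGFPSU}; the accounting of the $2^{-\lambda-\log d}$ bound is the same union bound used in \cref{thm:LAYFPSU} over the $d$ candidate exclusive axes. Full details would be deferred to an appendix analogous to the one containing the proof of \cref{thm:NGFPSU}.
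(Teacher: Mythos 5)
Your proposal follows essentially the same route as the paper: the paper gives no separate proof for \cref{thm:EXCFPSU} beyond the well-formedness of \textbf{Exc.KeyValue.Build} and an appeal to the simulation template of \cref{thm:NGFPSU} (with FOKHER replacing LOKHER and the product over the $d$ candidate prefixes as in \LAYFPSU), and your correctness and security arguments, including the role of the random $KV_a$ store in hiding $a$ and the $(1-2^{-\lambda-\log d})$ accounting, match that template.

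One intermediate claim in your false-positive analysis is not correct as stated: when $y_i\notin\cup_j\mathcal{B}_\delta(x_j)$ it is not true that, for every candidate prefix, at least one axis necessarily returns a dummy ciphertext. With prefix $a$ and $v=y_{ia}$, each coordinate $y_{ij}$ may well hit a genuine key, but belonging to \emph{different} balls $i'\in L_v$ across the axes; only if all axes matched the \emph{same} ball would $y_i$ lie in that ball. So the nonzero-w.h.p. argument must also cover the mixed-shares subcase (an incomplete subset, or a mixture, of the random shares $r_{i'jv}$ sums to zero only with probability about $1/|\mathcal{V}|$), exactly as the paper's proof of \cref{thm:NGFPSU} does by treating genuine-but-mismatched and dummy responses uniformly. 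This is a local repair and does not affect the rest of your argument.
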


\subsection{For d-stripable dataset }\label{sec:STRFPSU}

Here, we use the approach proposed in \cref{ExcSection} on distinct
strips of the balls: if the set $X=\{x_1,x_2, \ldots,x_n\}$ can be partitioned
into $d$ subsets, such that $X = \sqcup_{a=1}^d Z_a$ and all the
subgraphs $G_{Z_a}$ of $G_X$  are $a$-exclusive, then we obtain an
efficient protocol.
\begin{definition}\label{defStrip}
A set $X=\{x_1,x_2, \ldots, x_n\}$ is $d$-stripable if there exists a
partition  $X = \sqcup_{a=1}^d Z_a$ such that all  the  subgraphs
$G_{Z_a}$ of $G_X$  are a-exclusive.
\end{definition}

This variant of \EXCFPSU is composed of the same four algorithms \textbf{FPSU.\SETUP},
\textbf{\EXCFPSU.\QUERY},
\textbf{\EXCFPSU.\DECOD},\\
\textbf{FPSU.\UNION}.
The algorithm \textbf{\EXCFPSU.\QUERY} is modified to accept
$d$-stripable sets and given as  \textbf{\STRFPSU.\RESP}.
 More precisely, at the beginning of
\textbf{\STRFPSU.\RESP}, the receiver builds the $d$ strips $Z_a$, of
the $d$-stripable $X$, such that $X = \sqcup_{a=1}^d Z_a$ and
$G_{Z_a}$ is $a$-exclusive.
For each strip, she builds the key-value pairs using
$\textbf{Exc.KeyValue.Build}(Z_a,\delta,a)$ and obtains, for the strip
$a$, the sets of key-value pairs $KV^a$. Then, she can just run
\textbf{LOKHER.\RESP}.

\begin{theorem}\label{thm:STRFPSU}
In the honest-but-curious adversary settings, the FPSU protocol
composed of the 5 algorithms
\textbf{FPSU.\SETUP},
\textbf{\EXCFPSU.\QUERY},
\textbf{\STRFPSU.\RESP},
\textbf{\EXCFPSU.\DECOD},\\
\textbf{FPSU.\UNION}
is a correct and secure fuzzy private set union scheme   with a
probability of success of $1- 2^{-\lambda}$ under the \textbf{FOKHER}
perfect correctness, obliviousness and security assumptions and a
probability of success of $1- 2^{-\lambda-\log{d}}$.
\end{theorem}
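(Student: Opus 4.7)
The plan is to reduce Theorem~\ref{thm:STRFPSU} to Theorem~\ref{thm:EXCFPSU}. The \STRFPSU protocol differs from \EXCFPSU only in the receiver's construction of key-value pairs: rather than invoking \textbf{Exc.KeyValue.Build} once on the entire $X$ under a single $a$-exclusivity assumption, the receiver first partitions $X = \sqcup_{a=1}^d Z_a$ as guaranteed by Definition~\ref{defStrip}, and invokes \textbf{Exc.KeyValue.Build}$(Z_a,\delta,a)$ strip by strip. Each invocation is justified by $G_{Z_a}$ being $a$-exclusive, which is exactly the precondition required by \textbf{Exc.KeyValue.Build}.

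For correctness, I would first verify that the concatenation of the $d$ per-strip key-value sets $KV^a$ remains well-formed for each dimension $j \in [d]$. The keys generated by \textbf{Exc.KeyValue.Build} are of the form $a\|v\|x_{ij}+\epsilon$, so distinct strips produce distinct prefixes and per-dimension key uniqueness is preserved across the union. The partition property ensures that every center $x_i$ is processed in exactly one strip, and in that strip its random shares $r_{ij}$ satisfy $\sum_j r_{ij}=0$. Consequently, for a sender point $y\in\mathcal{B}_\delta(x_i)$, the values retrieved through \textbf{FOKHER} and summed across dimensions evaluate to zero under decryption, so \textbf{FPSU.\UNION} correctly filters out $y$ exactly as in the analysis of \EXCFPSU. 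For $y$ outside all balls, the retrieved values are IND-CPA encryptions of fresh random shares; their sum is nonzero except with probability at most $2^{-\lambda-\log d}$, by the same union bound over dimensions used in the proof of Theorem~\ref{thm:EXCFPSU}.

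For security, the sender and receiver views decompose into $d$ parallel invocations of \textbf{FOKHER}, one per dimension, exactly as in \EXCFPSU. The sender interacts with the receiver solely through these \textbf{FOKHER} transcripts, so its simulator can be obtained by composing the \textbf{FOKHER} sender simulator from Theorem~\ref{thm:OKHER} across strips. The double obliviousness of the underlying SOKVS implies that the sender cannot distinguish whether the receiver applied the single-strip \EXCFPSU construction or the multi-strip \STRFPSU construction; in both cases the $KV^a$ contents are hidden inside the split store. The receiver's view consists only of the sender's \textbf{FOKHER} queries and the final encrypted tuples $(\hat t_i,\hat u_i)$, whose joint distribution depends on $Y$ only through the output of \textbf{FPSU.\UNION}, so the receiver simulator is also inherited from that of \EXCFPSU.

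The main obstacle will be the hybrid argument that lifts the single-strip analysis to the $d$-strip composition while preserving the bound $1-2^{-\lambda-\log d}$. This reduces to a union bound over the $d$ strips and $d$ dimensions, together with a sequential application of SOKVS double obliviousness and the IND-CPA property of the homomorphic encryption scheme: each hop replaces the genuine per-strip $KV^a$ with a simulated one, at a cost absorbed by the $\log d$ slack already present in the statistical parameter. Once this hybrid chain is set up, the remaining steps of correctness and indistinguishability mirror the proof of Theorem~\ref{thm:EXCFPSU} verbatim.
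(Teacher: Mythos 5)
Your proposal is correct and follows essentially the same route as the paper: the paper's own proof of \cref{thm:STRFPSU} is precisely the one-line reduction "once the $d$ strips have been designed, the proof is similar to that of \EXCFPSU," which is the reduction you carry out (with the additional useful details that the strip index in the key prefix preserves well-formedness across strips and that the per-strip randomness keeps the zero-sum test intact). Your extra hybrid/composition discussion only makes explicit what the paper leaves implicit in its appeal to the \EXCFPSU argument.
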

\begin{proof}
Once the d strips have been designed, the proof is similar to the
proof of the \EXCFPSU~protocol. 
\end{proof}

There remains to generates a partition
$X=\sqcup_{a \in [d]}Z_a$ such that, for all $a \in [d]$, $G_{Z_a}$ is
$a$-exclusive. We give an idea of how to build such a partition in the
following examples:
\begin{itemize}
\item If $G_X$ is a-exclusive, use \textbf{\EXCFPSU};
\item If the chromatic number of $G_X$ is not greater than $ d$, use
 \textbf{ \LAYFPSU};
\item If all the vertices of $G_X$ are a-exclusive for some $a\in[d]$,
  then the $Z_a$ contain those vertices;
\item If $d=2$ and $G^{1,2}_X=(V_X,E_X^{1,2})$ is a bipartite graph
  (i.e. has no odd cycles), then $Z_1$ and $Z_2$ are the two
  independent set of vertices;
\item If the degree of $G_X$ is at most $\frac{3}{2} d -1$, then the
  constructive proof of~\cref{thm:STRP} in~\cref{app:dstrip:thm} allows to
  build the partition.
\end{itemize}
Furthermore, in Appendix \ref{app:dstrip:thm}, we
propose the heuristic of~\cref{algo:Str}, which outputs a partition
$X=\sqcup_{a=1}^d{Z_a}$, but may fail.
Characterizing all sets $X$ such that~\cref{algo:Str} outputs a
partition remains an open problem.

{\small
\bibliographystyle{plainurl}
\bibliography{bibfile}

\begin{thebibliography}{10}

\bibitem{ABE2025100600}
Yayoi Abe, Auna Setoh, and Gen Yoneda.
\newblock Chromatic number of random graphs: An approach using a recurrence
  relation.
\newblock {\em Results in Applied Mathematics}, 26:100600, 2025.
\newblock \href {https://doi.org/10.1016/j.rinam.2025.100600}
  {\path{doi:10.1016/j.rinam.2025.100600}}.

\bibitem{Albrecht:2015:Concrete}
Martin~R. Albrecht, Rachel Player, and Sam Scott.
\newblock On the concrete hardness of {Learning} with {Errors}.
\newblock {\em Journal of Mathematical Cryptology}, 9(3):169--203, October
  2015.
\newblock \href {https://doi.org/10.1515/jmc-2015-0016}
  {\path{doi:10.1515/jmc-2015-0016}}.

\bibitem{Alekhnovich:2011:Average}
Michael Alekhnovich.
\newblock More on average case vs approximation complexity.
\newblock {\em Comput. Complexity}, 20:755--786, Dec. 2011.
\newblock \href {https://doi.org/10.1007/s00037-011-0029-x}
  {\path{doi:10.1007/s00037-011-0029-x}}.

\bibitem{Angel2025}
Sebastian Angel.
\newblock {\em PIR with Compressed Queries}, pages 1817--1821.
\newblock 2025.
\newblock \href {https://doi.org/10.1007/978-3-030-71522-9_1792}
  {\path{doi:10.1007/978-3-030-71522-9_1792}}.

\bibitem{8401902}
Hilal Asi and Eitan Yaakobi.
\newblock Nearly optimal constructions of pir and batch codes.
\newblock {\em IEEE Trans. Inf. Theory}, 65(2):947--964, 2019.
\newblock \href {https://doi.org/10.1109/TIT.2018.2852294}
  {\path{doi:10.1109/TIT.2018.2852294}}.

\bibitem{bollobas1998random}
B{\'e}la Bollob{\'a}s.
\newblock {\em Random graphs}.
\newblock Cambridge University Press, 1998.
\newblock \href {https://doi.org/10.1017/CBO9780511814068}
  {\path{doi:10.1017/CBO9780511814068}}.

\bibitem{Bollobas1998Modern}
Béla Bollobás.
\newblock {\em Modern Graph Theory}.
\newblock Graduate Texts in Mathematics 184. 1 edition, 1998.
\newblock \href {https://doi.org/10.1007/978-1-4612-0619-4}
  {\path{doi:10.1007/978-1-4612-0619-4}}.

\bibitem{Brakerski2011FullyHE}
Zvika Brakerski, Craig Gentry, and Vinod Vaikuntanathan.
\newblock (leveled) fully homomorphic encryption without bootstrapping.
\newblock {\em ACM Trans. Comput. Theory}, 6(3), July 2014.
\newblock \href {https://doi.org/10.1145/2633600} {\path{doi:10.1145/2633600}}.

\bibitem{10.1145/359094.359101}
Daniel Br\'{e}laz.
\newblock New methods to color the vertices of a graph.
\newblock {\em Commun. ACM}, 22(4):251–256, April 1979.
\newblock \href {https://doi.org/10.1145/359094.359101}
  {\path{doi:10.1145/359094.359101}}.

\bibitem{BS05}
Justin Brickell and Vitaly Shmatikov.
\newblock Privacy-preserving graph algorithms in the semi-honest model.
\newblock In B.~Roy, editor, {\em {ASIACRYPT}'05}, pages 236--252, 2005.

\bibitem{10.1007/978-3-030-56877-1_2}
Melissa Chase and Peihan Miao.
\newblock Private set intersection in the internet setting from lightweight
  oblivious prf.
\newblock In {\em {CRYPTO}'20}, page 34–63, 2020.
\newblock \href {https://doi.org/10.1007/978-3-030-56877-1_2}
  {\path{doi:10.1007/978-3-030-56877-1_2}}.

\bibitem{Chmielewski:2008:Fuzzy}
Lukasz Chmielewski and Jaap-Henk Hoepman.
\newblock Fuzzy {Private} {Matching} ({Extended} {Abstract}).
\newblock In {\em 2008 {Third} {International} {Conference} on {Availability},
  {Reliability} and {Security}}, pages 327--334. IEEE, March 2008.
\newblock \href {https://doi.org/10.1109/ARES.2008.170}
  {\path{doi:10.1109/ARES.2008.170}}.

\bibitem{Malkin:2021:SilverVOLE}
Geoffroy Couteau, Peter Rindal, and Srinivasan Raghuraman.
\newblock Silver: {Silent} {VOLE} and {Oblivious} {Transfer} from {Hardness} of
  {Decoding} {Structured} {LDPC} {Codes}.
\newblock In T.~Malkin and C.~Peikert, editors, {\em {CRYPTO}'21}, volume
  12827, pages 502--534, 2021.
\newblock \href {https://doi.org/10.1007/978-3-030-84252-9_17}
  {\path{doi:10.1007/978-3-030-84252-9_17}}.

\bibitem{DBLP:conf/acisp/DavidsonC17}
Alex Davidson and Carlos Cid.
\newblock An efficient toolkit for computing private set operations.
\newblock In J.~Pieprzyk and S.~Suriadi, editors, {\em {ACISP}'17}, volume
  10343, pages 261--278, 2017.
\newblock \href {https://doi.org/10.1007/978-3-319-59870-3_15}
  {\path{doi:10.1007/978-3-319-59870-3_15}}.

\bibitem{diestel2017graph}
Reinhard Diestel.
\newblock {\em Graph Theory: 5th edition}.
\newblock Graduate Texts in Mathematics. Springer, 2017.
\newblock URL: \url{https://diestel-graph-theory.com}.

\bibitem{Dumas:2025:upsu}
Jean-Guillaume Dumas, Alexis Galan, Bruno Grenet, Aude Maignan, and Daniel~S.
  Roche.
\newblock Communication optimal unbalanced private set union.
\newblock In M.~Fischlin and V.~Moonsamy, editors, {\em {ACNS}'25}, pages
  107--135, Munich, Germany, July 2025.
\newblock \href {https://doi.org/10.1007/978-3-031-95764-2_5}
  {\path{doi:10.1007/978-3-031-95764-2_5}}.

\bibitem{DBLP:conf/acns/Frikken07}
Keith~B. Frikken.
\newblock Privacy-preserving set union.
\newblock In J.~Katz and M.~Yung, editors, {\em {ACNS}'07}, pages 237--252,
  2007.
\newblock \href {https://doi.org/10.1007/978-3-540-72738-5_16}
  {\path{doi:10.1007/978-3-540-72738-5_16}}.

\bibitem{10.5555/578533}
Michael~R. Garey and David~S. Johnson.
\newblock {\em Computers and Intractability: A Guide to the Theory of
  NP-Completeness}.
\newblock W. H. Freeman \& Co., USA, 1979.

\bibitem{10.1007/978-3-031-68397-8_10}
Gayathri Garimella, Benjamin Goff, and Peihan Miao.
\newblock Computation efficient structure-aware {PSI} from incremental function
  secret sharing.
\newblock In {\em {CRYPTO}'24}, page 309–345, 2024.
\newblock \href {https://doi.org/10.1007/978-3-031-68397-8_10}
  {\path{doi:10.1007/978-3-031-68397-8_10}}.

\bibitem{DBLP:conf/pkc/GarimellaMRSS21}
Gayathri Garimella, Payman Mohassel, Mike Rosulek, Saeed Sadeghian, and Jaspal
  Singh.
\newblock Private set operations from oblivious switching.
\newblock In J.~A. Garay, editor, {\em Public-Key Cryptography - {PKC}}, volume
  12711, pages 591--617, 2021.
\newblock \href {https://doi.org/10.1007/978-3-030-75248-4_21}
  {\path{doi:10.1007/978-3-030-75248-4_21}}.

\bibitem{10.1007/978-3-030-84245-1_14}
Gayathri Garimella, Benny Pinkas, Mike Rosulek, Ni~Trieu, and Avishay Yanai.
\newblock Oblivious key-value stores and amplification for private set
  intersection.
\newblock In {\em {CRYPTO}'21}, page 395–425, 2021.
\newblock \href {https://doi.org/10.1007/978-3-030-84245-1_14}
  {\path{doi:10.1007/978-3-030-84245-1_14}}.

\bibitem{10.1007/978-3-031-15802-5_12}
Gayathri Garimella, Mike Rosulek, and Jaspal Singh.
\newblock Structure-aware private set intersection, with applications to fuzzy
  matching.
\newblock In {\em {CRYPTO}'22}, page 323–352, 2022.
\newblock \href {https://doi.org/10.1007/978-3-031-15802-5_12}
  {\path{doi:10.1007/978-3-031-15802-5_12}}.

\bibitem{298226}
Meng Hao, Weiran Liu, Liqiang Peng, Hongwei Li, Cong Zhang, Hanxiao Chen, and
  Tianwei Zhang.
\newblock Unbalanced {Circuit-PSI} from oblivious {Key-Value} retrieval.
\newblock In {\em {USENIX} Security}, pages 6435--6451, Philadelphia, PA,
  August 2024.
\newblock URL:
  \url{https://www.usenix.org/conference/usenixsecurity24/presentation/hao-meng-unbalanced}.

\bibitem{10.1145/1007352.1007396}
Yuval Ishai, Eyal Kushilevitz, Rafail Ostrovsky, and Amit Sahai.
\newblock Batch codes and their applications.
\newblock In {\em {SToC}'24}, page 262–271. {ACM}, 2004.
\newblock \href {https://doi.org/10.1145/1007352.1007396}
  {\path{doi:10.1145/1007352.1007396}}.

\bibitem{Jia+22}
Yanxue Jia, Shi-Feng Sun, Hong-Sheng Zhou, Jiajun Du, and Dawu Gu.
\newblock Shuffle-based private set union: Faster and more secure.
\newblock In {\em {USENIX} Security}, pages 2947--2964, Boston, MA, August
  2022.
\newblock URL:
  \url{https://www.usenix.org/conference/usenixsecurity22/presentation/jia}.

\bibitem{KS05}
Lea Kissner and Dawn Song.
\newblock Privacy-preserving set operations.
\newblock In V.~Shoup, editor, {\em {CRYPTO}'05}, pages 241--257, 2005.

\bibitem{10.1145/2976749.2978381}
Vladimir Kolesnikov, Ranjit Kumaresan, Mike Rosulek, and Ni~Trieu.
\newblock Efficient batched oblivious prf with applications to private set
  intersection.
\newblock pages 818--829. {ACM}, 2016.
\newblock \href {https://doi.org/10.1145/2976749.2978381}
  {\path{doi:10.1145/2976749.2978381}}.

\bibitem{DBLP:conf/asiacrypt/KolesnikovRT019}
Vladimir Kolesnikov, Mike Rosulek, Ni~Trieu, and Xiao Wang.
\newblock Scalable private set union from symmetric-key techniques.
\newblock In S.~D. Galbraith and S.~Moriai, editors, {\em {ASIACRYPT}'19},
  pages 636--666, 2019.
\newblock \href {https://doi.org/10.1007/978-3-030-34621-8_23}
  {\path{doi:10.1007/978-3-030-34621-8_23}}.

\bibitem{Lindell2017}
Yehuda Lindell.
\newblock {\em How to Simulate It -- A Tutorial on the Simulation Proof
  Technique}, pages 277--346.
\newblock 2017.
\newblock \href {https://doi.org/10.1007/978-3-319-57048-8_6}
  {\path{doi:10.1007/978-3-319-57048-8_6}}.

\bibitem{10.1145/3460120.3485381}
Muhammad~Haris Mughees, Hao Chen, and Ling Ren.
\newblock Onionpir: Response efficient single-server pir.
\newblock In {\em {CCS}'21}, page 2292–2306. {ACM}, 2021.
\newblock \href {https://doi.org/10.1145/3460120.3485381}
  {\path{doi:10.1145/3460120.3485381}}.

\bibitem{10179329}
Muhammad~Haris Mughees and Ling Ren.
\newblock Vectorized batch private information retrieval.
\newblock In {\em 2023 IEEE Symposium on Security and Privacy (SP)}, pages
  437--452, 2023.
\newblock \href {https://doi.org/10.1109/SP46215.2023.10179329}
  {\path{doi:10.1109/SP46215.2023.10179329}}.

\bibitem{10.5555/1756123.1756146}
Pascal Paillier.
\newblock Public-key cryptosystems based on composite degree residuosity
  classes.
\newblock In {\em {EUROCRYPT}'99}, page 223–238, 1999.

\bibitem{10.1007/978-3-030-26954-8_13}
Benny Pinkas, Mike Rosulek, Ni~Trieu, and Avishay Yanai.
\newblock Spot-light: Lightweight private set intersection from sparse ot
  extension.
\newblock In {\em {CRYPTO}'19}, page 401–431, 2019.
\newblock \href {https://doi.org/10.1007/978-3-030-26954-8_13}
  {\path{doi:10.1007/978-3-030-26954-8_13}}.

\bibitem{PAXOS}
Benny Pinkas, Mike Rosulek, Ni~Trieu, and Avishay Yanai.
\newblock Psi from paxos: Fast, malicious private set intersection.
\newblock In {\em {EUROCRYPT}'20}, page 739–767, 2020.
\newblock \href {https://doi.org/10.1007/978-3-030-45724-2_25}
  {\path{doi:10.1007/978-3-030-45724-2_25}}.

\bibitem{DBLP:conf/ccs/RaghuramanR22}
Srinivasan Raghuraman and Peter Rindal.
\newblock Blazing fast {PSI} from improved {OKVS} and subfield {VOLE}.
\newblock In H.~Yin, A.~Stavrou, C.~Cremers, and E.~Shi, editors, {\em
  {CCS}'22}, pages 2505--2517. {ACM}, 2022.
\newblock \href {https://doi.org/10.1145/3548606.3560658}
  {\path{doi:10.1145/3548606.3560658}}.

\bibitem{Rindal:2021:VOLEPSI}
Peter Rindal and Phillipp Schoppmann.
\newblock {VOLE-PSI:} fast {OPRF} and circuit-psi from vector-ole.
\newblock In A.~Canteaut and F-X. Standaert, editors, {\em {EUROCRYPT}'21},
  volume 12697, pages 901--930, 2021.
\newblock \href {https://doi.org/10.1007/978-3-030-77886-6_31}
  {\path{doi:10.1007/978-3-030-77886-6_31}}.

\bibitem{cryptoeprint:2025/054}
Hyunjung Son, Seunghun Paik, Yunki Kim, Sunpill Kim, Heewon Chung, and Jae~Hong
  Seo.
\newblock Doubly efficient fuzzy private set intersection for high-dimensional
  data with cosine similarity.
\newblock Cryptology {ePrint} Archive, \href{https://ia.cr/2025/054}{Paper
  2025/054}, 2025.

\bibitem{Tu:2023:CCS:UPSU}
Binbin Tu, Yu~Chen, Qi~Liu, and Cong Zhang.
\newblock Fast unbalanced private set union from fully homomorphic encryption.
\newblock In W.~Meng, C.~Damsgaard Jensen, C.~Cremers, and E.~Kirda, editors,
  {\em {CCS'23}}, pages 2959--2973. {ACM}, 2023.
\newblock \href {https://doi.org/10.1145/3576915.3623064}
  {\path{doi:10.1145/3576915.3623064}}.

\bibitem{10.1007/978-3-031-58740-5_12}
Aron van Baarsen and Sihang Pu.
\newblock Fuzzy private set intersection with large hyperballs.
\newblock In {\em {EUROCRYPT}'24}, page 340–369, 2024.
\newblock \href {https://doi.org/10.1007/978-3-031-58740-5_12}
  {\path{doi:10.1007/978-3-031-58740-5_12}}.

\bibitem{10.1007/978-3-031-30545-0_18}
Kevin Yeo.
\newblock Lower bounds for (batch) {PIR} with private preprocessing.
\newblock In C.~Hazay and M.~Stam, editors, {\em {EUROCRYPT}'23}, pages
  518--550, 2023.
\newblock \href {https://doi.org/10.1007/978-3-031-30545-0_1}
  {\path{doi:10.1007/978-3-031-30545-0_1}}.

\bibitem{10.1145/3658644.3690308}
Cong Zhang, Yu~Chen, Weiran Liu, Liqiang Peng, Meng Hao, Anyu Wang, and Xiaoyun
  Wang.
\newblock Unbalanced private set union with reduced computation and
  communication.
\newblock In {\em {CCS}'24}, page 1434–1447. {ACM}, 2024.
\newblock \href {https://doi.org/10.1145/3658644.3690308}
  {\path{doi:10.1145/3658644.3690308}}.

\bibitem{Zhang:2023:Usenix:LPSU}
Cong Zhang, Yu~Chen, Weiran Liu, Min Zhang, and Dongdai Lin.
\newblock Linear private set union from multi-query reverse private membership
  test.
\newblock In J.~A. Calandrino and C.~Troncoso, editors, {\em {USENIX}
  Security}, pages 337--354, 2023.
\newblock URL:
  \url{https://www.usenix.org/conference/usenixsecurity23/presentation/zhang-cong}.

\bibitem{Hazay:2023:Optimal}
Mingxun Zhou, Wei-Kai Lin, Yiannis Tselekounis, and Elaine Shi.
\newblock Optimal {Single}-{Server} {Private} {Information} {Retrieval}.
\newblock In C.~Hazay and M.~Stam, editors, {\em {EUROCRYPT}'23}, pages
  395--425. 2023.
\newblock \href {https://doi.org/10.1007/978-3-031-30545-0_14}
  {\path{doi:10.1007/978-3-031-30545-0_14}}.

\end{thebibliography}
}
\appendix
\crefalias{section}{appendix}
\section{\textbf{OKHER} correctness and soundness}\label{OKHERproof}

In this appendix, we prove \cref{thm:OKHER} which states that
given  correct and secure PIR and SOKVS schemes, the protocol defined
in~\cref{proto:OKHER} is correct and secure under the honest-but-curious
model.
%
%
We first build the views of the receiver and the sender.
The view of the receiver is composed of its input $KV= \{(k_i,v_i)_{i
  \in [N]}\}$, the keys of the encryption scheme, the set  of
key-encrypted value pairs $KC=\{(k_i,Enc_{pk}(v_i))_{i \in [N]}\}$,
its SOKVS and PIR views, $\textbf{view}_{\mathcal R}^{SOKVS}(KC, \bot)$
as well as
$\textbf{view}_{\mathcal R}^{PIR}(L, NZIndices(l(q_i))_{i\in [m]})$,
where $q_i$ is a sender's query (and $L$ comes from
$\textbf{view}_{\mathcal R}^{SOKVS}(KC, \bot)$):

$
\textbf{view}_{\mathcal{R}}^{OKHER}(KV,Q)=\left(KV,pk,sk,KC,\textbf{view}_{\mathcal{R}}^{SOKVS}(KC,\bot),\textbf{view}_{\mathcal{R}}^{PIR}(L,NZIndices(l(q_i))_{i\in[m]})\right)$
Now, the view of the sender is composed of its input $Q= \{(q_i)_{i \in [m]} \}$,
the public key of the encryption scheme $pk$, and its PIR and SOKVS views:

$\textbf{view}_{\mathcal{S}}^{OKHER}(KV,Q)=\left(Q,pk,\textbf{view}_{\mathcal{S}}^{SOKVS}(KC,\bot), \textbf{view}_{\mathcal{S}}^{PIR}(L,NZIndices(l(q_i))_{i\in[m]})\right)$

The correctness proof has two cases. (1) If
$q_i\in\{k_1,\ldots,k_N\}$,
from the correctness of the PIR protocol,
the sender receives the correct $\alpha$ non-zero indices of $l(q_i)$;
from the correctness of the SOKVS  we have that
$\textbf{OKHER.\DECOD}$ outputs $e_i= Enc_{pk}(v_j)$ where $j$
satisfies  $q_i=k_j$, w.h.p.
(2) If $q_i \notin \{k_1,\ldots,k_N\}$, as the encryption scheme is
IND-CPA and SOKVS is correct, the output $e_i$ is pseudo-random.

For the security, we assume that the Batch-PIR  protocol is  secure
and that the SOKVS is doubly oblivious (since the encryption scheme is
IND-CPA).
More precisely, for all vectors $L \subset \mathbb Z  $ and for all
$I\subset [1..|L|] $, we have that
$\textbf{view}^{PIR}_{\mathcal{S}}(L,I)\equiv_c\textbf{Sim}^{PIR}_{\mathcal{S}}(I,L[I])$
and
$\textbf{view}^{PIR}_{\mathcal{R}}(L,I)\equiv_c\textbf{Sim}^{PIR}_{\mathcal{R}}(L,\bot)$
We define simulators $Sim^{OKHER}_{\mathcal{R}}$ and
$Sim^{OKHER}_{\mathcal{S}}$ for the corrupt receiver (resp. sender),
and we prove that they are indistinguishable from the
corresponding views.
\paragraph{Corrupted receiver}
$\textbf{sim}_{\mathcal{R}}^{OKHER}(KV,\bot \}$ simulates the view of
the corrupt receiver as follows:
(1) It generates the pair of keys $(pk,sk)$ of the encryption scheme
  $(Enc,Dec)$;
(2) From $KV$ and the key $pk$, it computes
  $KC=\{(k_i,Enc_{pk}(v_i))\}$ and uses the {\bf{SOKVS}} algorithms
to obtain the mappings $l(.)$ and $r(.)$ and
the vectors $L$, $R$;
(3) It randomly generates a sequence of queries $Q$ and computes
  $I_i=NZIndices(l(q_i))$ for all $i \in[m]$;
(4) It invokes the PIR receiver simulator
  $\textbf{sim}_{\mathcal{R}}^{PIR}(L,\bot)$.

Therefore,
$\textbf{sim}_{\mathcal{R}}^{OKHER}(KV,\bot)=\left(KV,pk,sk,KC,\textbf{view}_{\mathcal{R}}^{SOKVS}(KC,\bot),\right.$
$\left.\textbf{Sim}_{\mathcal{R}}^{PIR}(L,\bot)\right)$.
Then, since $\textbf{view}_{\mathcal{R}}^{PIR}(L,I_i) \equiv_c \textbf{sim}_{\mathcal{R}}^{PIR}(L,\bot)$
we have that
$ \textbf{sim}_{\mathcal{R}}^{OKHER}(KV,\bot) \equiv_c \textbf{view}_{\mathcal{R}}^{OKHER}(KV,Q) $.
\paragraph{Corrupted sender}
$\textbf{sim}_{\mathcal{S}}^{OKHER}(Q,(e_i)_{i \in [m]})$ simulates
the view of the corrupt sender as follows:
(1) To construct $L$ and $R$, from the setup, it knows the mappings   $l: {\mathcal K} \rightarrow \{0,1\}^{|L|} $ of weight $\alpha$  and  $ r: {\mathcal K} \rightarrow \{0,1\}^{|R|}$  such that $|L| \sim N$ and $|R|\sim \log N$.
It checks that the $m$ values  $(e_i)_{i \in [m]} \subset {\mathcal
  C}$ are distinct.
(Otherwise, only the $m'$ different values of $e$ are considered).
It randomly generates $(f_i)_{i \in [N-m]} \subset {\mathcal C}$ and
shuffles $e$ and $f$ into a third sequence $(g_i)_{i \in [N]}$. In the
same way, it randomly chooses $N-m$ keys and shuffles them, with the
same permutation, with the sequence $(q_i)_{i \in [m]}$ in order to obtain a
sequence $(q'_i)_{i \in[N]}$. Then it computes  two vectors $L$ and
$R$ satisfying $\forall i \in[N], <l(q'_i)||r(q'_i),L||R>=g_i$.
By construction of $l(.)$ and $r(.)$, $L$ an $R$ do exist and the
matrix $\left(l(q'_i)||r(q'_i)\right)$ is full rank w.h.p.~\cite{298226}.
From the double obliviousness of SOKVS, $L||R$ is indistinguishable
from a random vector;

(2) It computes, for all $i \in[m]$,
  $\textbf{PIR.\QUERY}(O_{\mathcal S}^{PIR}, I_i)$
such that $I_i=NZIndices(l(q_i))$ and then extracts the needed values
$L^{I_i}$ from~$L$;
(3) By construction, for all $i \in[m]$, $e_i=<r(q_i),R>+\sum_{j=1}^{\alpha} L_j^{I_i}$.

Finally, the view simulated by $\textbf{sim}_{\mathcal{S}}^{OKHER}$ is
computationally indistinguishable from the real one and we have:
$$\textbf{view}^{OKHER}_{\mathcal S}(KV,Q)\equiv_c \textbf{sim}_{\mathcal{S}}^{OKHER}(Q, output_{\mathcal S}(KV,Q))$$
\section{\textbf{FPSU} correctness and soundness}\label{FPSUproof}

In this section, we prove \cref{thm:NGFPSU} which states that
 the FPSU protocol presented in \cref{proto:NGFPSU} is a correct and secure fuzzy private set union scheme  under the OKHER perfect correctness, obliviousness and security assumptions.

As previously we consider  honest-but-curious adversary settings and we build the views of the receiver and the sender.
(1) The view of the receiver is composed of:
\begin{itemize}
\item its input $X= \{(x_i)_{i \in [n]}\}$ and the radius $\delta$,
\item the sequence of key-value pairs for all dimension $j\in [d]$,
 $KV_j=\{(x_{ij}+\epsilon,r_{ij})_{i\in[N],\epsilon\in[-\delta.. \delta]}\}$ where $(r_{ij})_{i\in[N],j\in[d-1]}$
are random values under  $ {\mathcal V}$ and
$r_{id}=-\sum_{j=0}^{d-1}r_{ij}$;
\item the view $\textbf{view}_{\mathcal R}^{OKHER}(KV, Y)$ defined in
  \cref{OKHERproof};
\item the sequences $(\hat t_i)_{i \in[m]}$, $(\hat u_i)_{i \in[m]}$  sent by the sender:

$\textbf{view}^{FPSU}_{\mathcal{R}}(X,Y)=\left(X,\delta,(r_{ij})_{i\in[n],j\in[d]},
\textbf{view}^{OKHER}_{\mathcal{R}}(KV_j,Y_j)_{j\in[d]},
(\hat{t_i})_{i\in[m]},(\hat{u_i})_{i\in[m]}\right)$
\end{itemize}
(2) The view of the sender is composed of:
\begin{itemize}
\item its input $Y= \{(y_i)_{i \in [m]} \}$;
\item the view $\textbf{view}_{\mathcal S}^{OKHER}(KV, Y)$;
\item $(\hat t_i)_{i \in[m]}$, and $(\hat u_i)_{i \in[m]}$.

$\textbf{view}^{FPSU}_{\mathcal{S}}(X,Y)=\left(Y,\textbf{view}^{OKHER}_{\mathcal{S}}(KV_j,Y_j)_{j\in[d]},
(\hat{t_i})_{i\in[m]},(\hat{u_i})_{i\in[m]}\right)$
\end{itemize}

Note that $(\textbf{view}^{OKHER}_{\mathcal{R}}(KV_j,Y_j))_{j \in[d]}$
share a single setup $O^{FPSU}_{\mathcal{R}} $ for all
$j\in[d]$. Similarly,
$(\textbf{view}^{OKHER}_{\mathcal{S}}(KV_j,Y_j))_{j\in [d]}$ share a
single setup $O^{FPSU}_{\mathcal{S}}$ for all $j\in [d]$.
The \textbf{OKHER} protocol has a probability of success of $(1-
2^{-\log d-\lambda}$ so the $d$ intricate $OKHER$ protocols have a
probability of success of
$(1-\frac{2^{-\lambda}}{d})^d\geq{(1-2^{-\lambda})}$.
The correctness of the scheme relies on the fact that an element $y$
owned by the sender has to be added to the receiver's set $X$ if, and
only if $y \notin \cup_{i \in[n]} {\mathcal B}_\delta(x_i)$.
For all the $j \in[d]$ dimensions, the receiver builds a key-value
sequence $(k_{ij},\hat r_{ij})_{i \in{(2 \delta+1)n}, j\in[d]}$  such
that
$\forall{y}\in\cup_{i\in[n]}{\mathcal{B}}_\delta(x_i),\forall\in[d],\exists{l_j}\in[(2\delta+1)n],y_j=k_{l_jj}\wedge\sum_{j=1}^d\hat{r_{l_jj}}=\hat{0}$.
With the axis disjoint assumption,
$(k_{ij})_{i\in{(2\delta+1)n},j\in[d]}=\sqcup_{i\in{n}}[x_{ij}-\delta,x_{ij}+\delta]$
is a disjoint union for all dimensions.
Therefore, there is no conflict during the generation of the randoms
and the construction of $KV$ is straightforward.
Then, we consider two cases:\\
(1) If $y \in \cup_{i \in[n]} {\mathcal{ B}}_\delta(x_i)$,
since we have axis disjoint balls, there exists a unique $i$ such that
$\forall j \in [d]$, $y_j \in [x_{ij} -\delta, x_{ij}+\delta]$. So
$\forall j \in[d], y_j$ belongs to the set of keys of $KV_j$ and
$\textbf{OKHER.\DECOD}$ outputs $(\hat s_{j})_{j\in [d]}$ where
$\hat{s_{j}}$ are the encryptions of values $s_{j}$  such that
$t=\sum_{j\in[d]}s_{j}=0$, by construction. The sender builds
$\hat{t}=\sum_{j\in [d]}  \hat s_{j}$ and $\hat u=y \hat t$ and sends
$(\hat{t},\hat{u})$ to the receiver. The receiver deciphers
$(\hat{t},\hat{u})$ and obtains $(0,0)$.\\
(2) If $b \notin \cup_{i \in[n]} {\mathcal{ B}}_\delta(x_i)$, no
$i\in[n]$ is such that for all
$j\in[d],b_j\in[x_{ij}-\delta,x_{ij}+\delta]$.
Thus, $\textbf{OKHER.\DECOD}$  outputs  $(\hat s_{j})_{j\in [d]}$ such
that $s_j=r_{kj}$,  if it exists  $k$ such that $b_j \in
[x_{kj}-\delta, x_{kj}+\delta]$ or a random value otherwise.
Therefore, w.h.p., these $d$ values $b_j$ can not be associated to $d$
keys whose sum is $0$ and $t=\sum_{j\in [d]} s_j \neq 0$.

For the security, we have that for all key-pair set $W \subset
\mathcal K \times \mathcal V $ and for all $Z \subset \mathcal K$,
$\textbf{view}^{OKHER}_{\mathcal{S}}(W,Z)\equiv_c\textbf{Sim}^{OKHER}_{\mathcal{S}}(Z,output_S(W,Z))$
and
$\textbf{view}^{OKHER}_{\mathcal{R}}(W,Z)\equiv_c\textbf{Sim}^{OKHER}_{\mathcal{R}}(W,\bot)$.
We thus consider simulators $Sim^{FPSU}_{\mathcal R}$ and
$Sim^{FPSU}_{\mathcal S}$ for
the corrupt receiver and sender.
\paragraph{Corrupted sender}
$\textbf{sim}_{\mathcal{S}}^{FPSU}(\textbf{Y},\bot)$ simulates the
view of the corrupt sender as follows:
(1) $ \textbf{sim}_{\mathcal{S}}^{FPSU}$ uniformly samples
  $(s_{ij}) \subset {\mathcal V}$  and computes
  for all $i$ and $j$, $\hat s_{ij}=Enc_{pk}(s_{ij})$;
(2) it invokes $d$ times the OKHER's sender simulator with only one
  setup, that is
  $\textbf{sim}_{\mathcal{S}}^{OKHER}(Y_j,(\hat{s_{ij}})_{i\in[m]})$,
  with a probability of success of
  $(1-\epsilon)^d$. $\textbf{sim}_{\mathcal{S}}^{FPSU}$ adds the
  output to
  the view.
(3) Eventually, it computes for all $i \in[m]$, $\hat t_i=
  \sum_{j=1}^{d} \hat s_{ij}$ and  $\hat u_i=y_i\hat t_i$ and adds them
  to its view.

The view simulated by $\textbf{sim}_{\mathcal{S}}^{FPSU}$ is
computationally indistinguishable from the real one as, for all set of
key-pair values $KV$, we have:
$$\textbf{sim}_{\mathcal{S}}^{OKHER}(Y,\hat{s_{ij}})\equiv_c\textbf{view}_{\mathcal{S}}^{OKHER}(KV,Y)$$
And:
\begin{equation*}\begin{split}
\textbf{sim}_{\mathcal{S}}^{FPSU}(Y, \bot)
& \equiv
\left ( (\textbf{sim}^{OKHER}_{\mathcal{S}}(Y, (\hat s_{ij})_{i \in
    [m]}))_{j \in[d]}, (\hat t_i,\hat u_i)_{i \in[m]}
\right )
\\
& \equiv_c \left ( (\textbf{view}^{OKHER}_{\mathcal{S}}(KV_j,Y_j ))_{j
    \in[d]}, (\hat t_i,\hat u_i)_{i \in[m]} \right ) \\
& \equiv \textbf{view}_{\mathcal{S}}^{FPSU}(\textbf{X},\textbf{Y})
\end{split}
\end{equation*}
\paragraph{Corrupted receiver}
$\textbf{sim}_{\mathcal{R}}^{FPSU}(\textbf{X},\textbf Z )$, such that
$\textbf Z=\textbf{X}\sqcup \{ y_i\in \textbf{Y}, \forall k\in[n]
y_{i} \notin {\mathcal B}_\delta(x_k) \}$, simulates the view of the
corrupt receiver as follows:
(1) From $Z$, it takes the  $|Z| -n $ elements  not in $\cup_{i \in [n]} {\mathcal B}_{\delta}(x_i)$ and randomly chooses $m+n-|Z|$ elements in $\cup_{i \in [n]} {\mathcal B}_{\delta}(x_i)$ to build $Y^*$;
(2) It randomly chooses $(r_{ij})_{i \in[n], j \in[d-1]}$  and builds
  $r_{id}= - \sum_{j=1}^{d-1} r_{ij}$;
(3) For all dimension $j$, it builds the key-value pair sequences
  $KV_j=\{(x_{ij}+\epsilon,r_{ij})_{i\in[n],\epsilon\in[-\delta,\delta]} \} $;
(4) it invokes the OKHER receiver simulator $\textbf{sim}_{\mathcal{R}}^{OKHER}(KV_j,\bot)$ for $j\in[d]$ and adds the intermediate output to its view;
(5) it generates $( t^*_k)_{k \in[m]}$, such that $t^*_k=0$ if
  $Y^*_k \in \cup_{i \in [n]} {\mathcal B}_{\delta}(x_i)$ and $t_k$ is
  randomly sampled from ${\mathcal V}^d$ otherwise; it encrypts
  $(t^*_k)_{k \in[m]}$ and build
  $(\hat{u^*_i})_{i\in[m]}=(y^*_i\hat{t^*_i})_{i\in[m]}$. As the LHE
  is IND-CPA, $\hat t^*_i$ is indistinguishable from $\sum_{j=1}^d\hat{s_{ij}}$.

Any subset of $\cup_{i \in [n]} {\mathcal B}_{\delta}(x_i)$ of size  $m+n-|Z|$  can lead to an
indistinguishable simulation. Then, as the sender's input set has been properly
simulated,
\begin{equation*}\begin{split}
\textbf{sim}_{\mathcal{R}}^{FPSU}(\textbf{X},\textbf{X} \cup
\textbf{Y})
&\equiv_c \left (X, \delta,r_{ij},\textbf{sim}^{OKHER}_{\mathcal{R}}(KV_j,\bot),\hat{t_i},\hat{u_i}\right)_{i \in[n],j \in[d]}\\
&\equiv_c \left (X, \delta, r_{ij}\textbf{view}^{OKHER}_{\mathcal{R}}(KV_j,Y_j),\hat{t_i},\hat{u_i}\right)_{i \in[n],j \in[d]}\\
&\equiv_c \textbf{view}_{\mathcal{R}}^{FPSU}(\textbf{X},\textbf{Y})
\end{split}
\end{equation*}


\section{Construction of d-stripable graphs}\label{app:dstrip:thm}
\begin{theorem}\label{thm:STRP}
Let $X= \{x_1, \cdots, x_n\}$ be a set of points and $\delta$ a threshold
such that the degree of the induced labeled graph $G_X$ is at most
$(\frac{3}{2}d -1)$, then $X$ is d-stripable.
\end{theorem}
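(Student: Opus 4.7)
The plan is to prove~\cref{thm:STRP} by induction on $n=|X|$, via a greedy vertex-insertion step. The base case $n\le 1$ has an empty edge set, and the trivial partition $Z_1=X$, $Z_a=\emptyset$ for $a\ge 2$ works. For the inductive step, I would pick any $v\in X$. Removing $v$ can only decrease labeled degrees, so $G_{X\setminus\{v\}}$ still satisfies the hypothesis $\deg(G_{X\setminus\{v\}})\le\tfrac{3d}{2}-1$ and, by induction, admits a $d$-stripable partition $X\setminus\{v\}=\sqcup_{a=1}^{d}Z'_{a}$. The remaining task is to insert $v$ into some class $Z_a:=Z'_a\cup\{v\}$ while keeping $G_{Z_a}$ $a$-exclusive.

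The key observation is that the only edges affected by this insertion are of the form $(u,v)$ with $u\in Z'_a$, and such an edge violates $a$-exclusivity precisely when $a\in\phi((u,v))$ and $|\phi((u,v))|\ge 2$. In particular, single-labeled neighbours of $v$ never forbid any colour. Writing $\sigma_2(v):=\{u\in\sigma(v):|\phi((u,v))|\ge 2\}$ for the set of multi-labeled neighbours, and letting $c(u)$ denote the colour assigned to $u$ by the inductive partition, the set $F(v)\subseteq[d]$ of colours forbidden to $v$ satisfies $F(v)\subseteq\{c(u):u\in\sigma_2(v)\}$, and hence $|F(v)|\le|\sigma_2(v)|$.

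It then suffices to establish $|\sigma_2(v)|<d$. Since each $u\in\sigma_2(v)$ contributes at least $2$ to the labeled degree $\deg(v)=\sum_{u}|\phi((u,v))|$, one gets
$$|\sigma_2(v)|\le\frac{\deg(v)}{2}\le\frac{1}{2}\left(\frac{3d}{2}-1\right)=\frac{3d-2}{4}<d.$$
As $|\sigma_2(v)|$ is a non-negative integer, this yields $|F(v)|\le|\sigma_2(v)|\le d-1$, so at least one colour in $[d]\setminus F(v)$ is always available for $v$, completing the induction.

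The proof is not intricate; the only real obstacle is the right framing. Specifically, one must recognise (i)~that single-labeled incident edges never constrain the colour assigned to $v$, so only $\sigma_2(v)$ plays a role, and (ii)~that each multi-labeled incident edge forbids at most a single colour (namely $c(u)$, and only when $c(u)\in\phi((u,v))$). With those two points in place, the hypothesis $\deg(G_X)\le\tfrac{3d}{2}-1$ translates exactly into the critical inequality $|\sigma_2(v)|<d$, which is the single spot where the constant $\tfrac{3}{2}$ is actually used.
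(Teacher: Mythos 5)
Your proof is correct, and it takes a genuinely different route from the paper's. The paper argues in two phases: it first repeatedly peels off any vertex $v$ that, for some label $a$, either has no incident $a$-edge or has only singly labeled incident $a$-edges (assigning such $v$ to $Z_a$), and then shows that every surviving vertex --- which must see all $d$ labels on multi-labeled incident edges --- has at most $d-1$ neighbours \emph{in total}, so that a greedy colouring (DSATUR) splits the remainder into $d$ independent sets, which are vacuously $a$-exclusive. You instead run a single greedy insertion, phrased as induction on $n$, and your two observations --- that singly labeled incident edges never forbid a colour, and that each multi-labeled incident edge $(u,v)$ forbids at most the single colour $c(u)$, and only when $c(u)\in\phi((u,v))$ --- are sharper than the paper's counting, which charges \emph{every} neighbour of a surviving vertex against the $d$ available colours rather than only the multi-labeled ones. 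The payoff is twofold. First, your argument is shorter and avoids the two-phase case analysis (and, like the paper's, it is constructive and yields a quadratic-time partitioning algorithm). Second, it proves a strictly stronger statement: since $|F(v)|\le|\sigma_2(v)|\le\lfloor\deg(v)/2\rfloor$, a free colour exists whenever $\deg(G_X)\le 2d-1$, improving the hypothesis $\deg(G_X)\le\tfrac{3}{2}d-1$; and this constant is tight, since for $d=2$ three pairwise overlapping balls induce a triangle of doubly labeled edges, of degree $2d=4$, that is not $2$-stripable. It would be worth stating the result with the improved bound.
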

\begin{proof}
The proof is constructive. We first consider $d$ empty sets $Z_a$ $(a \in[d])$. We want to assign all elements of X to one subset $Z_a$ such that $X= \sqcup_{a \in [d]} Z_a$ and the subgraphs $G_{Z_a}$ are a-exclusive.
For a given element $v \in V_X$,
(1) If  $\exists{a}\in [d]$ such that  $v$ has no a-edge
($\forall{i}\in{V{\setminus}\{v\}}, a \notin\phi((v,i))$) then assign $x_v$ to $Z_a$ and remove $x_v$ to $X$.
(2) else, if there exists a label $a$ such that the subgraph
  $G_{\sigma(v)}$, with vertex set $v$ and its neighbors, is
  a-exclusive, i.e.,
  $\forall{i}\in{V},a\in\phi((v,i))\implies\forall{b}\in[d]{\setminus}\{a\},b\notin\phi((v,i)))$, then assign $x_v$ to $Z_a$ and remove $x_v$ from $X$.
(3) else, $v$ is the source of a set of edges such that all the $d$
  labels are present and $G_{\sigma(v)}$ is not a-exclusive for any $a
  \in[d]$.
Therefore, there must exist some edges $(v,v')$ with at least two
labels. Thus $\sigma(v) {\setminus} \{v\}$ has at most
$\frac{d}{2}+(deg(G_X)-d)=\frac{3}{2}d-1-\frac{d}{2}=d-1$ elements.

Once all the vertices satisfying case 1 and case 2 have been
(recursively) assigned to their corresponding subsets and removed from
$X$, the remaining subset is $X'$ and the corresponding subgraph is
$G_{X'}$. Now $G_{X'}$ can be viewed as a simple graph of
degree~$d-1$.
Since the chromatic number of any simple graph satisfies $\chi(G')
\leq deg(G')+1$, there exists a partition of ${X'}$ into at most $d$
independent sub-sets $(Z'_a)_{a \in[d]}$. This partition can
be computed with $O(n^2)$ operations by DSATUR and the sub-graphs
$G_{Z'_a}=(Z'_a,E_X)$ have no edges.
Finally, we have that for all $a\in[d]$, $Z_a=Z_a \cup Z'_a$
satisfies:
$X= \sqcup_{a \in[d]} Z_a$ and the subgraphs $G_{Z_a}$ are a-exclusive.
\end{proof}

More generally, one can build~\cref{algo:Str}, that tries to make at
most $d$ strips. But charecterizing precisely the graphs where it
fails remains an open problem.

  \begin{algorithm}[htbp]
        \caption{\textbf{Strips}$( \textbf{X}, \delta)$}
        \label{algo:Str}
            \begin{flushleft}
            \textbf{Input:} $\textbf{X}=\{x_1, \ldots,x_n\}$ and a threshold $\delta$.\\
            \textbf{Output:} Failed or a partition $X= \sqcup_{a \in [d]} Z_a$,  $G_{Z_a}$ are $a$-exclusive.
            \end{flushleft}
        \begin{algorithmic}[1]
  \algnotext{EndFor}
  \algnotext{EndWhile}
\ForAll{$a \in[d]$}
$\mathcal{R}$: $Z_a= \emptyset$;
\EndFor
\State $\mathcal{R}$:  $chgt=True$;
\While{$chgt$}
\State $\mathcal{R}$: $chgt=False$;
\ForAll{$x_i \in X$} $\mathcal{R}$: $a =1$
\While{$a \leq d \wedge\neg  chgt$}
\If{$G_i$ is $a$-exclusive}
\State $\mathcal{R}$: $Z_a=Z_a,x_i$; $X=X {\setminus} \{x_i\}$; $chgt=True$;
\EndIf
\State $\mathcal{R}$: $a=a+1$;
\EndWhile
\EndFor
\EndWhile
\ForAll{$a \in[d]$}
\State $\mathcal{R}$: $(Z_i^a)_{i \in[\chi_a]}=DSATUR(X,\delta,a)$;
\State $\mathcal{R}$: 
Choose $i_a \in [\chi_a]$, s.t. the elements of $Z_{i_a}$ corresponds to vertices with the largest degree of multiplicity;
\State $\mathcal{R}$: $X= X {\setminus} Z_{i_a}^a$
$\mathcal{R}$: $Z_a=Z_a \cup Z_{i_a}^a$
\EndFor
\If{$X=\emptyset$} $\mathcal{R}$: Return  $( Z_a)_{a\in[d]}$.
\Else{} Return Failed

\EndIf
\end{algorithmic}
    \end{algorithm}

\end{document}